\newtheorem{theorem}{Theorem}
\newtheorem{assumption}{Assumption}
\newtheorem{definition}{Definition}
\newtheorem{example}{Example}
\newtheorem{lemma}{Lemma}
\newtheorem*{problem*}{Formal problem}
\DeclareSymbolFontAlphabet{\amsmathbb}{AMSb}
\newcommand\munderbar[1]{%
  \underaccent{\bar}{#1}}
\crefname{equation}{Eq.}{Eqs.}
\crefname{pluralequation}{Eqs.}{Eqs.}
\crefname{algorithm}{Algorithm}{Algorithm}
\crefname{figure}{Fig.}{Figs.}
\crefname{pluralfigure}{Figs.}{Figs.}
\crefname{section}{Sect.}{Sects.}
\crefname{pluralsection}{Sects.}{Sects.}
\crefname{table}{Table}{Table}
\crefname{pluraltable}{Tables}{Tables}
\crefname{definition}{Def.}{Def.}
\crefname{pluraldefinition}{Defs.}{Defs.}
\crefname{theorem}{Theorem}{Theorems}
\crefname{pluraltheorem}{Theorems}{Theorems}
\crefname{lemma}{Lemma}{Lemmas}
\crefname{plurallemma}{Lemmas}{Lemmas}
\crefname{example}{Example}{Example}
\crefname{pluralexample}{Examples}{Examples}
\crefname{problem}{Problem}{Problem}
\crefname{pluralproblem}{Problems}{Problems}
\crefname{assumption}{Assumption}{Assumption}
\crefname{pluralassumption}{Assumptions}{Assumptions}
\crefname{remark}{Remark}{Remark}
\crefname{pluralremark}{Remarks}{Remarks}
\crefname{appendix}{Appendix}{Appendices}
\crefname{pluralappendix}{Appendices}{Appendices}
\pgfplotsset{compat=1.8}
\definecolor{red}{rgb}{0.745,0.192,0.102}
\definecolor{darkgreen}{RGB}{34,161,55}
\definecolor{ruhuisstijlrood}{rgb}{0.745,0.192,0.102}
\definecolor{ruhuisstijlzwart}{rgb}{0,0,0}
\definecolor{ruhuisstijlwit}{rgb}{0.98,0.98,0.98}
\newcommand{\rured}[1]{\textcolor{ruhuisstijlrood}{#1}}
\newcommand{\darkgreen}[1]{\textcolor{darkgreen}{#1}}
\newcommand{\stochy}{\textrm{StocHy}}
\newcommand{\sreachtools}{\textrm{SReachTools}}
\newcommand{\probreach}{\textrm{ProbReach}}
\DeclareMathOperator*{\minimize}{minimize}
\DeclareMathOperator*{\argmax}{arg\,max}
\DeclareMathOperator*{\argmin}{arg\,min}
\newcommand{\R}{\amsmathbb{R}}
\newcommand{\N}{\amsmathbb{N}}
\newcommand{\distr}[1]{\ensuremath{\mathit{Dist(#1)}}}
\newcommand{\States}{\ensuremath{S}}
\newcommand{\Actions}{Act}
\newcommand{\initStateMDP}{\ensuremath{s_I}}
\newcommand{\transfunc}{\ensuremath{P}}
\newcommand{\transfuncImdp}{\ensuremath{\mathcal{P}}}
\newcommand{\transfuncLow}{\ensuremath{\underline{P}}}
\newcommand{\Interval}{\ensuremath{\amsmathbb{I}}}
\newcommand{\iMDP}{\ensuremath{(\States,\Actions,\initStateMDP,\transfuncImdp)}}
\newcommand{\imdp}{\ensuremath{\mathcal{M}_\Interval}}
\newcommand{\policy}{\ensuremath{\pi}}
\newcommand{\policySpace}{\ensuremath{\Pi}}
\newcommand{\controller}{\ensuremath{c}}
\newcommand{\cControlSpace}{\ensuremath{\mathcal{U}}}
\newcommand{\partitionSpace}{\ensuremath{\mathcal{X}}}
\newcommand{\region}{\ensuremath{\mathcal{P}}}
\newcommand{\boxSample}{\ensuremath{\mathcal{H}}}
\newcommand{\targetSet}{\ensuremath{\mathcal{T}}}
\newcommand{\discardedSet}{\ensuremath{R}}
\newcommand{\state}{\ensuremath{x}}
\newcommand{\control}{\ensuremath{u}}
\newcommand{\noise}{\ensuremath{\eta}}
\newcommand{\disturbance}{\ensuremath{q}}
\newcommand{\horizon}{\ensuremath{K}}
\newcommand{\Map}{\ensuremath{T}}
\newcommand{\threshold}{\ensuremath{\lambda}}
\newcommand{\SafeSet}{\ensuremath{\mathcal{Z}}}
\newcommand{\ReachSet}{\ensuremath{\mathcal{G}}}
\newcommand{\BackReach}{\ensuremath{\mathcal{R}^{-1}_{\hat\alpha}}}
\newcommand{\propertyMDP}{\ensuremath{\varphi_{\initStateMDP}^\horizon}}
\newcommand{\reachProbDiscr}{\ensuremath{Pr^\policy(\imdp[\transfunc] \models \propertyMDP)}}
\newcommand{\reachProbDiscrRobustStar}{\ensuremath{\underline{Pr}^{\policy^\star}(\imdp \models \propertyMDP)}}
\newacronym[]{LTL}{LTL}{linear temporal logic}
\newacronym[]{iid}{i.i.d.}{independent and identically distributed}
\newacronym[]{UAV}{UAV}{unmanned aerial vehicle}
\newacronym[]{PAC}{PAC}{probably approximately correct}
\newacronym[]{DRO}{DRO}{distributionally robust optimization}
\newacronym[plural=MDPs,firstplural=Markov decision processes (MDPs)]{MDP}{MDP}{Markov decision process}
\newacronym[plural=iMDPs,firstplural=interval Markov decision processes (iMDPs)]{iMDP}{iMDP}{interval Markov decision process}
\newacronym[plural=aMDPs,firstplural=augmented MDPs (aMDPs)]{aMDP}{aMDP}{augmented MDP}
\newacronym[plural=POMDPs,firstplural=partially observable Markov decision processes (POMDPs)]{POMDP}{POMDP}{partially observable Markov decision process}
\newcommand{\change}[1]{#1}
\newif\ifappendix
\title{Probabilities Are Not Enough: Formal Controller Synthesis for \\ Stochastic Dynamical Models with Epistemic Uncertainty}
\author{
     Thom Badings\textsuperscript{\rm 1}, 
     Licio Romao\textsuperscript{\rm 2}, 
     Alessandro Abate\textsuperscript{\rm 2}, 
     Nils Jansen\textsuperscript{\rm 1}
}
\definecolor{plotblue}{rgb}{0.1,0.498039215686275,0.9549019607843137}
\begin{document}

\maketitle

\begin{abstract}
Capturing uncertainty in models of complex dynamical systems is crucial to designing safe controllers. Stochastic noise causes aleatoric uncertainty, whereas imprecise knowledge of model parameters leads to epistemic uncertainty. Several approaches use formal abstractions to synthesize policies that satisfy temporal specifications related to safety and reachability. However, the underlying models exclusively capture aleatoric but not epistemic uncertainty, and thus require that model parameters are known precisely. Our contribution to overcoming this restriction is a novel abstraction-based controller synthesis method for continuous-state models with stochastic noise and uncertain parameters. By sampling techniques and robust analysis, we capture both aleatoric and epistemic uncertainty, with a user-specified confidence level, in the transition probability intervals of a so-called interval Markov decision process (iMDP). We synthesize an optimal policy on this iMDP, which translates (with the specified confidence level) to a feedback controller for the continuous model with the same performance guarantees. Our experimental benchmarks confirm that accounting for epistemic uncertainty leads to controllers that are more robust against variations in parameter values.
\end{abstract}

\section{Introduction}
\label{sec:Introduction}

\paragraph{Stochastic models.}
Stochastic dynamical models capture complex systems where the likelihood of transitions is specified by probabilities~\cite{kumar2015stochastic}. 
Controllers for stochastic models must act safely and reliably with respect to a desired specification. 
Traditional control design methods use, e.g., Lyapunov functions and optimization to guarantee stability and (asymptotic) convergence.
However, alternative methods are needed to give formal guarantees about richer temporal specifications relevant to, for example, safety-critical applications~\cite{DBLP:journals/tac/FanQMNMV22}.

\paragraph{Finite abstractions.}
A powerful approach to synthesizing certifiably safe controllers leverages probabilistic verification to provide formal guarantees over specifications of \change{\emph{safety} (always avoid certain states) and \emph{reachability} (reach a certain set of states).}
A common example is the \emph{reach-avoid} specification, where the task is to maximize the probability of reaching desired goal states while avoiding unsafe  states~\cite{DBLP:conf/hybrid/FisacCTS15}.
\change{Finite abstractions can make continuous models amenable to techniques and tools from formal verification: by discretizing their state and action spaces, abstractions result in, e.g., finite Markov decision processes (MDPs) that soundly capture the continuous dynamics~\cite{Abate2008probabilisticSystems}.}
Verification guarantees on the finite abstraction can thus carry over to the continuous model.
In this paper, we adopt such an abstraction-based approach to controller synthesis.

\paragraph{Probabilities are not enough.}
The notion of uncertainty is often distinguished in \emph{aleatoric} (statistical) and \emph{epistemic} (systematic)  uncertainty~\cite{fox2011distinguishing,sullivan2015introduction}.  
Epistemic uncertainty is, in particular, modeled by parameters that are not precisely known~\cite{smith2013uncertainty}. 
A general premise is that purely probabilistic approaches fail to capture epistemic uncertainty~\cite{DBLP:journals/ml/HullermeierW21}. 
In this work, we aim to reason under both aleatoric and epistemic uncertainty in order to synthesize provably correct controllers for safety-critical applications. 
Existing abstraction methods fail to achieve this novel, general goal.

\begin{figure}[t!]
	\centering
	\includegraphics[width = \columnwidth]{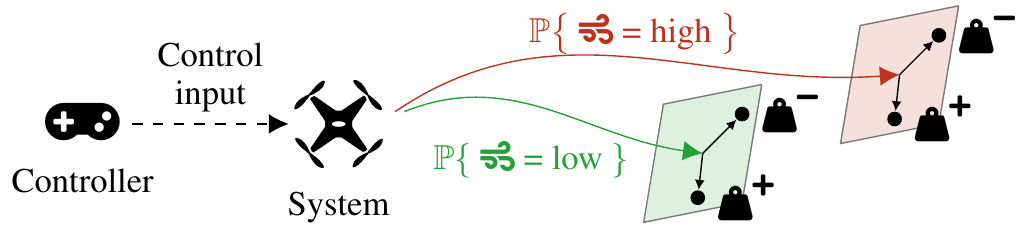}
	\caption{Aleatoric (stochastic) uncertainty in the wind (\faIcon{wind}) causes probability distributions over the outcomes of controls, while epistemic uncertainty in the mass (\faIcon{weight-hanging}) of the drone causes state transitions to be nondeterministic.}
	\label{fig:nondeterminism_vs_probabilities}
\end{figure}

\paragraph{Models with epistemic uncertainty.}
\change{
We consider reach-avoid problems for stochastic dynamical models with continuous state and action spaces under epistemic uncertainty described by parameters that lie within a \emph{convex uncertainty set}.
In the simplest case, this uncertainty set is an interval, such as a drone whose mass is only known to lie between $0.75$--$\SI{1.25}{\kilogram}$.
As shown in \cref{fig:nondeterminism_vs_probabilities}, the dynamics of the drone depend on uncertain factors, such as the wind and the drone's mass.
For the wind, we may derive a probabilistic model from, e.g., weather data to reason over the likelihood of state dynamics. 
For the mass, however, we do not have information about the likelihood of each value, so employing a probabilistic model is unrealistic.
Thus, we treat epistemic uncertainty in such imprecisely known parameters (in this case, the mass) using a \emph{nondeterministic framework} instead.}

\paragraph{Problem statement.}
Our goal is to synthesize a controller that (1) is \emph{robust against nondeterminism} due to parameter uncertainty and (2) \emph{reasons over probabilities} derived from stochastic noise.
\change{In other words, the controller must satisfy a given specification \emph{under any possible outcome of the nondeterminism} (robustness) and \emph{with at least a certain probability regarding the stochastic noise} (reasoning over probabilities).}
We wish to synthesize a controller with a \emph{\gls{PAC}}-style guarantee to satisfy a reach-avoid specification with at least a desired threshold probability.
Thus, we solve the following problem:

\begin{mdframed}[backgroundcolor=red!10, nobreak=true, innerleftmargin=8pt, innerrightmargin=8pt]
\textbf{Problem.} Given a reach-avoid specification for a stochastic model with uncertain parameters, compute a controller and a \emph{lower bound} on the probability that, \emph{under any admissible value of the parameters}, the specification is probabilistically satisfied with this lower bound and \emph{with at least a user-specified confidence probability}.
\end{mdframed}

\noindent
We solve this problem via a discrete-state abstraction of the continuous model.
We generate this abstraction by partitioning the continuous state space and defining actions that induce potential transitions between elements of this partition. 

Algorithmically, the closest approach to ours is~\citet{Badings2022AAAI}, which uses abstractions to synthesize controllers for stochastic models with aleatoric uncertainty of unknown distribution, but with \emph{known parameters}.
Our setting is more general, as epistemic uncertainty requires fundamental differences to the technical approach, as~explained~below. 

\paragraph{Robustness to capture nondeterminism.}
The main contribution that allows us to capture nondeterminism, is that we reason over \emph{sets} of potential transitions (as shown by the boxes in \cref{fig:nondeterminism_vs_probabilities}), rather than \emph{precise} transitions, e.g., as in~\citet{Badings2022AAAI}.
Intuitively, for a given action, the aleatoric uncertainty creates a probability distribution over \emph{sets of possible outcomes}. 
To ensure robustness against epistemic uncertainty, we consider \emph{all possible outcomes} within these sets.
We show that, for our class of models, computing these sets of all possible outcomes is computationally tractable. 
Building upon this reasoning, we provide the following guarantees to solve the above-mentioned problem.

\paragraph{1) PAC guarantees on abstractions.}
We show that both probabilities and nondeterminism can be captured in the transition probability intervals of so-called interval Markov decision processes (iMDPs,~\citealt{DBLP:journals/ai/GivanLD00}).
We use sampling methods from scenario optimization~\cite{DBLP:journals/arc/CampiCG21} and concentration inequalities~\cite{Boucheron2013concentrationInequalities} to compute \gls{PAC} bounds on these intervals. 
With a predefined confidence probability, the \gls{iMDP} correctly captures both aleatoric and epistemic~uncertainty.

\paragraph{2) Correct-by-construction control.} 
For the \gls{iMDP}, we compute a \emph{robust optimal policy} that maximizes the worst-case probability of satisfying the reach-avoid specification.
The \gls{iMDP} policy is automatically translated to a controller for the original, continuous model on the fly.
We show that, by construction, the \gls{PAC} guarantees on the \gls{iMDP} carry over to the satisfaction of the specification by the continuous model.

\paragraph{Contributions.}
We develop the first abstraction-based, formal controller synthesis method that simultaneously captures epistemic and aleatoric uncertainty for models with continuous state and action spaces. 
We also provide results on the \gls{PAC}-correctness of obtained \gls{iMDP} abstractions and guarantees on the synthesized controllers for a reach-avoid specification. 
Our numerical experiments in \cref{sec:Experiments} confirm that accounting for epistemic uncertainty yields controllers that are more robust against deviations in the parameter values.
\subsection*{Related Work}

\paragraph{Uncertainty models.}
Distinguishing aleatoric from epistemic uncertainty is a key challenge towards trustworthy AI~\cite{DBLP:journals/electronicmarkets/ThiebesLS21}, and has been considered in reinforcement learning~\cite{DBLP:Charpentier:journals/corr/abs-2206-01558}, Bayesian neural networks~\cite{DBLP:conf/icml/DepewegHDU18,DBLP:journals/ral/LoquercioSS20}, 
and systems modeling~\cite{smith2013uncertainty}.
Dynamical models with epistemic parameter uncertainty (but \emph{without aleatoric uncertainty}) are considered by~\citet{yedavalli2014robust}, and~\citet{geromel2006robust}.
Control of models similar to ours is studied by~\cite{DBLP:Modares:journals/corr/abs-2202-04495}, albeit only for stability specifications.

\paragraph{Model-based approaches.}
Abstractions of stochastic models are a well-studied research area~\cite{Abate2008probabilisticSystems,Alur2000discretesystems}, with applications to stochastic hybrid~\cite{DBLP:conf/hybrid/CauchiLLAKC19,LavaeiSAZ21:stochhybrid}, switched~\cite{DBLP:journals/tac/LahijanianAB15}, and partially observable systems~\cite{Badings2021FilterUncertainty,DBLP:conf/adhs/HaesaertNVTAAM18}.
Various tools exist, e.g., \stochy~\citep{DBLP:conf/tacas/CauchiA19}, \probreach~\citep{DBLP:conf/hybrid/ShmarovZ15}, and \sreachtools~\citep{DBLP:conf/hybrid/VinodGO19}.
However, in contrast to the approach of this paper, \emph{none of these papers deals with epistemic uncertainty}.

\citet{DBLP:journals/tac/FanQMNMV22} use optimization for reach-avoid control of linear but \emph{non-stochastic} models with bounded disturbances.
\change{Barrier functions are used for cost minimization in stochastic optimal control~\cite{DBLP:conf/corl/Pereira0ET20}.
So-called \emph{funnel libraries} are used by~\citet{DBLP:journals/ijrr/MajumdarT17} for robust feedback motion planning under~epistemic~uncertainty.
Finally, \citet{Zikelic2022} learn policies together with formal reach-avoid certificates using neural networks for nonlinear systems with only aleatoric uncertainty.}

\paragraph{Data-driven approaches.}

Models with (partly) unknown dynamics express epistemic uncertainty about the underlying system.
Verification of such models based on data has been done using Bayesian inference~\cite{DBLP:journals/automatica/HaesaertHA17}, optimization~\cite{DBLP:journals/automatica/KenanianBJT19,DBLP:Vinod:journals/corr/abs-2206-11103}, and Gaussian process regression~\cite{DBLP:conf/cdc/JacksonLFL20}.
\change{Moreover,~\citet{DBLP:journals/ral/KnuthCOB21}, and~\citet{DBLP:conf/cdc/ChouOB21} use neural network models for feedback motion planning for nonlinear deterministic systems with probabilistic safety and reachability guarantees.}
Data-driven abstractions have been developed for monotone~\cite{DBLP:conf/l4dc/MakdesiGF21} and event-triggered systems~\cite{DBLP:journals/csysl/PeruffoM23}.
By contrast to our setting, \emph{these approaches consider models with non-stochastic dynamics}.
A few recent exceptions also consider aleatoric uncertainty~\cite{salamati2022safety,lavaei2022constructing}, but these approaches require more strict assumptions (e.g., discrete input sets) than our model-based approach.

\paragraph{Safe learning.}
While outside the scope of this paper, our approach fits naturally in a model-based safe learning context~\cite{Brunke2021safelearning,DBLP:journals/jmlr/GarciaF15}.
In such a setting, our approach may synthesize controllers that guarantee safe interactions with the system, while techniques from, for example, reinforcement learning (RL, \citealt{DBLP:conf/nips/BerkenkampTS017,DBLP:journals/tac/ZanonG21}) or stochastic system identification~\cite{DBLP:conf/cdc/TsiamisP19} can reduce the epistemic uncertainty based on state observations.
\change{A risk-sensitive RL scheme providing \emph{approximate} safety guarantees is developed by \citet{DBLP:journals/jair/GeibelW05}; we instead give \emph{formal} guarantees at the cost of an expensive abstraction.}
\section{Problem Statement}
\label{sec:Problem}

The cardinality of a discrete set $\mathcal{X}$ is $|\mathcal{X}|$. 
A probability space is a triple $(\Omega, \mathcal{F}, \amsmathbb{P})$ of an arbitrary set $\Omega$, sigma algebra $\mathcal{F}$ on $\Omega$, and probability measure $\amsmathbb{P} \colon \mathcal{F} \to [0,1]$. 
The convex hull of a polytopic set $\mathcal{X}$ with vertices $v_1, \ldots, v_n$ is $\mathrm{conv} (v_1, \ldots, v_n)$.
The word \emph{controller} relates to continuous models; a \emph{policy} to discrete models.

\subsection*{Stochastic Models with Parameter Uncertainty}
To capture parameter uncertainty in a linear time-invariant stochastic system, we consider a model \change{(we extend this model with parameters describing uncertain additive disturbances in \cref{sec:Algorithm})} whose continuous state $\state_k$ at time $k \in \N$ evolves as
\begin{equation}
    \state_{k+1} = A(\alpha) \state_k + B(\alpha) \control_k + \noise_k,
    \label{eq:uncertain_LTI_model}
\end{equation}
where $\control_k \in \cControlSpace$ is the control input, which is constrained by the control space $\cControlSpace = \mathrm{conv} (u^1, \ldots, u^q) \subset \R^m$, being is a convex polytope with $q$ vertices, and where the process noise $\noise_k$ is a stochastic process defined on a probability space $(\Omega, \mathcal{F}, \amsmathbb{P})$.
Both the dynamics matrix $A(\alpha) \in \R^{n \times n}$ and the control matrix $B(\alpha) \in \amsmathbb{R}^{n \times m}$ are a convex combination of a finite set of $r \in \N$ known matrices:
\begin{align}
    A(\alpha) &= \sum\nolimits_{i = 1}^r \alpha_i A_i, \enskip
    B(\alpha) = \sum\nolimits_{i = 1}^r \alpha_i B_i, \enskip
    \label{eq:dynamics_matrix_vertices}
\end{align}
where the \emph{unknown model parameter} $\alpha \in \Gamma$ can be any point in the unit simplex $\Gamma \subset \R^r$:
\begin{align*}
\Gamma
    &= \Big\{ 
        \alpha \in \amsmathbb{R}^r \colon \alpha_i \geq 0,~\forall i \in \{1, \ldots, r\}, \enskip 
        \sum\nolimits_{i = 1}^r \alpha_i = 1 
    \Big\}.
    \label{eq:Simplex_def} 
\end{align*}
The model in \cref{eq:uncertain_LTI_model} captures epistemic uncertainty in $A(\alpha)$ and $B(\alpha)$ through model parameter $\alpha$,
as well as aleatoric uncertainty in the discrete-time stochastic process $(\noise_k)_{k \in \amsmathbb{N}}$.
\begin{assumption}
    The noise $\noise_k$ is independent and identically distributed (i.i.d., which is a common assumption) and has density with respect to the Lebesgue measure.
    However, contrary to most definitions, we allow $\amsmathbb{P}$ to be \emph{unknown}.
\label{assump:iid_process}
\end{assumption}
\noindent 
\change{Importantly, being distribution-free, our proposed techniques hold for any distribution of $\eta$ that satisfies \cref{assump:iid_process}.}

The matrices $A_i$ and $B_i$ can represent the bounds of intervals over parameters, as illustrated by the following example.
\begin{example}
    \label{ex:Model}
    Consider again the drone of \cref{fig:nondeterminism_vs_probabilities}.
    The drone's longitudinal position $p_k$ and velocity $v_k$ are modeled as
    \begin{equation*}
        \state_{k+1} = 
        \begin{bmatrix}
            p_{k+1} \\ v_{k+1}
        \end{bmatrix} = 
        \begin{bmatrix}
            1 & \tau \\ 0 & 1 - \frac{0.1 \tau}{m}
        \end{bmatrix} \state_k +
        \begin{bmatrix}
            \frac{\tau^2}{2m} \\ \frac{\tau}{m}
        \end{bmatrix} \control_k + 
        \noise_k,
    \end{equation*}
    with $\tau$ the discretization time, and $\cControlSpace = [-5, 5]$.
    Assume that the mass $m$ is only known to lie within $[0.75, 1.25]$.
    Then, we obtain a model as \cref{eq:uncertain_LTI_model}, with $r=2$ vertices where $A_1, B_1$ are obtained for $m \coloneqq 0.75$, and $A_2, B_2$ for $m \coloneqq 1.25$.
    \qed
\end{example}

\subsection*{Reach-Avoid Planning Problem}

The goal is to steer the state $\state_k$ of \cref{eq:uncertain_LTI_model} to a desirable state within $K$ time steps while always remaining in a safe region.
Formally, let the \emph{safe set} $\SafeSet$ be a compact set of $\amsmathbb{R}^n$, and let $\ReachSet \subseteq \SafeSet$ be the \emph{goal set} (see \cref{fig:Partition}).
The control inputs $\control_k$ in \cref{eq:uncertain_LTI_model} are selected by a time-varying \emph{feedback controller}:
\begin{definition}
    A time-varying feedback \emph{controller} $\controller \colon \R^n \times \N \to \cControlSpace$ for \cref{eq:uncertain_LTI_model} is a function that maps a state $\state_k \in \R^n$ and a time step $k \in \N$ to a control input $\control_k \in \cControlSpace$.
    \label{def:ControlLaw}
\end{definition}
\noindent
The space of admissible feedback controllers is denoted by $\mathcal{C}$.
The \emph{reach-avoid probability} $V(\state_0, \alpha, c)$ is the probability that \cref{eq:uncertain_LTI_model}, under parameter $\alpha \in \Gamma$ and a controller $\controller$, satisfies a reach-avoid planning problem with respect to the sets $\SafeSet$ and $\ReachSet$, starting from initial state $\state_0 \in \R^n$. Formally:
\begin{definition}
    The \emph{reach-avoid probability} $V \colon \R^n \times \Gamma \times \mathcal{C} \to [0,1]$ for a given controller $c \in \mathcal{C}$ on horizon $K$ is
    \begin{equation*}
    \begin{split}
    V(\state_0, &\ \alpha, c) = \amsmathbb{P} \big\{
        \noise_k \in \Omega
        \, \colon \,
        \state_{k} \text{ evolves as per \cref{eq:uncertain_LTI_model}},
        \\
        & \exists k' \in \{0,\ldots,K\} 
        \,\, \text{such that} \,\,
        \state_{k'} \in \ReachSet, \text{ and}
        \\
        & \state_k \in \SafeSet, \, \control_k = \controller(\state_k, k) \,\, \forall k \in \{0,\ldots,k' \}
        \big\}.
    \end{split}
    \end{equation*}
\end{definition}
\noindent
We aim to find a controller for which the reach-avoid probability is above a certain threshold $\lambda \in [0,1]$.
However, since parameter $\alpha$ of \cref{eq:uncertain_LTI_model} is unknown, it is impossible to compute the reach-avoid probability explicitly.  
We instead take a \emph{robust approach}, thus stating the formal problem as follows:
\begin{problem*}
    \label{prob:FormalProblem}
    \change{Given an initial state $\state_0$, find a controller $\controller \in \mathcal{C}$ together with a (high) probability threshold $\lambda \in [0,1]$, such that $V(\state_0, \alpha, c) \geq \lambda$ holds for all $\alpha \in \Gamma$.}
\end{problem*}
\noindent
Thus, we want to find a controller $\controller \in \mathcal{C}$ that is robust against \emph{any possible instance} of parameter $\alpha \in \Gamma$ in \cref{eq:uncertain_LTI_model}.
Due to the aleatoric uncertainty in \cref{eq:uncertain_LTI_model} of unknown distribution, we solve the problem up to a user-specified confidence probability $\beta \in (0,1)$, as we shall see in \cref{sec:Algorithm}.

\begin{figure}[!t]
    \centering
    \includegraphics[width = 0.65\columnwidth]{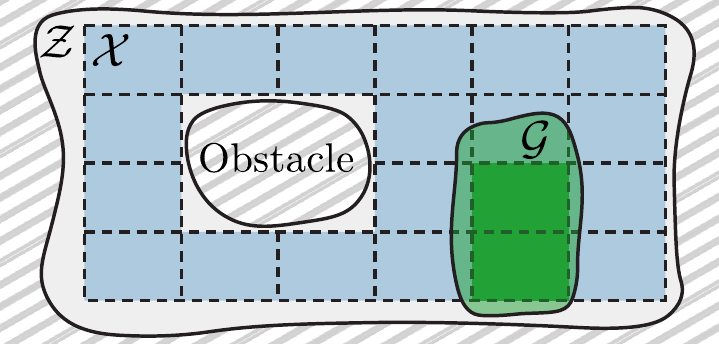}
    \caption{Partition of safe set $\partitionSpace \subseteq \SafeSet$ (which excludes obstacles) into regions that define \gls{iMDP} states.
    A state $s_i$ is a goal state, $s_i \in \States_\ReachSet$, if its region is contained in the goal region $\ReachSet$.}
    \label{fig:Partition}
\end{figure}

\subsection*{Interval Markov Decision Processes}

We solve the problem by generating a finite-state abstraction of the model as an \glsfirst{iMDP}:
\begin{definition}
    An \emph{\gls{iMDP}} is a tuple $\imdp=\iMDP$ where 
    $\States$ is a finite set of states, 
    $\Actions$ is a finite set of actions, 
    $\initStateMDP \in \States$ is the initial state, and $\transfuncImdp \colon \States \times \Actions \times \States \to \Interval \cup \{ [0,0] \}$ is an uncertain partial probabilistic transition function over intervals $\Interval = \{ [a,b] \ | \ a,b \in (0,1] \text{ and } a \leq b \}$.
    \label{def:iMDP}
\end{definition}
\noindent
Note that an interval may not have a lower bound of $0$, except for the $[0,0]$ interval.
If $\transfuncImdp(s,a,s') = [0,0] \, \forall s' \in \States$, action $a \in \Actions$ is not enabled in state $s \in \States$.
We can instantiate an \gls{iMDP} into an MDP with a partial transition function $\transfunc \colon \States \times \Actions \rightharpoonup \distr{\States}$ by fixing a probability $P(s,a)(s') \in \transfuncImdp(s,a,s')$ for each $s, s' \in \States$ and for each $a \in \Actions$ enabled in $s$, such that $P(s,a)$ is a probability distribution over $\States$.
For brevity, we write this instantiation as $P \in \transfuncImdp$ and denote the resulting MDP by $\imdp[P]$.

A deterministic policy~\cite{DBLP:books/daglib/BaierKatoen2008} for an \gls{iMDP} $\imdp$ is a function $\policy \colon \States^\ast \to \Actions$, where $\States^\ast$ is a sequence of states (memoryless policies do not suffice for time-bounded specifications), with $\policy \in \policySpace_{\imdp}$ the admissible policy space.
For a policy $\policy$ and an instantiated MDP $\imdp[P]$ for $P \in \transfuncImdp$, we denote by $\reachProbDiscr$ the probability of satisfying a reach-avoid specification\footnote{\change{Note that $\propertyMDP$ is a reach-avoid specification for an \gls{iMDP}, while $V(\state_0, \alpha, c)$ is the reach-avoid probability on the dynamical model.}} $\propertyMDP$ (i.e., reaching a goal in $\States_\ReachSet \subseteq \States$ within $K \in \N$ steps, while not leaving a safe set $\States_\SafeSet \subseteq \States$).
A \emph{robust optimal policy} $\policy^\star \in \policySpace_{\imdp}$ \emph{maximizes} this probability under the \emph{minimizing} instantiation $P \in \transfuncImdp$:\footnote{\change{Such \emph{min-max} (and \emph{max-min}) problems are common in (distributionally) robust optimization, see~\citet{DBLP:books/degruyter/Ben-TalGN09}, and~\citet{DBLP:journals/ior/WiesemannKS14} for details.}}
\begin{equation}
\begin{split}
    \label{eq:optimalPolicyRobust}
	\policy^\star 
	&= \argmax_{\policy \in \policySpace} \min_{\transfunc \in \transfuncImdp} \reachProbDiscr.
\end{split}
\end{equation}
We compute an optimal policy in \cref{eq:optimalPolicyRobust} \change{using a robust variant of value iteration proposed by~\citet{DBLP:conf/cdc/WolffTM12}.
Note that deterministic policies suffice to obtain optimal values for \cref{eq:optimalPolicyRobust}, see~\citet{DBLP:conf/cav/PuggelliLSS13}.}
\section{Finite-State Abstraction}
\label{sec:Abstractions}

To solve the formal problem, we construct a finite-state abstraction of \cref{eq:uncertain_LTI_model} as an \gls{iMDP}.
We define the actions of this \gls{iMDP} via backward reachability computations on a so-called \emph{nominal model} that neglects any source of uncertainty. 
We then compensate for the error caused by this modeling simplification in the \gls{iMDP}'s transition probability intervals.

\subsection*{Nominal Model of the Dynamics} 

To build our abstraction, we rely on a \emph{nominal model} that neglects both the aleatoric and epistemic uncertainty in \cref{eq:uncertain_LTI_model}, and is thus deterministic.
Concretely, we fix any value $\hat\alpha \in \Gamma$ and define the nominal model dynamics as
\begin{equation}
    \hat{\state}_{k+1} = A(\hat\alpha) \state_k + B(\hat\alpha) \control_k.
    \label{eq:nominal_model}
\end{equation}
Due to the linearity of the dynamics, we can now express the successor state $\state_{k+1}$ \emph{with full uncertainty}, from \cref{eq:uncertain_LTI_model}, as
\begin{equation}
    \state_{k+1} = \hat{\state}_{k+1} + \delta(\alpha,\state_k,\control_k) + \noise_k,
\label{eq:AbstractionError}
\end{equation}
with $\delta(\alpha,\state_k,\control_k)$ being a new term, called the \emph{epistemic error}, encompassing the error caused by parameter uncertainty:
\begin{equation}
    \delta(\alpha,\state_k,\control_k) = [A(\alpha) - A(\hat{\alpha})]\state_k + [B(\alpha) - B(\hat{\alpha})]\control_k. 
    \label{eq:EpistemicUncertainty_def}
\end{equation}
In other words, the successor state $\state_{k+1}$ is the nominal one, plus the epistemic error, and plus the stochastic noise.
Note that for $\alpha = \hat\alpha$ (i.e., the true model parameters equal their nominal values), we obtain $\delta(\alpha,\state_k,\control_k) = 0$. 
We also impose the next assumption on the nominal model, \change{which guarantees that we can compute the inverse image of \cref{eq:nominal_model} for a given $\hat{\state}_{k+1}$, 
\ifappendix
    and is used in the proof of \cref{lemma:BackReachSet_Representation} in \cref{sec:proofs}.
\else
    and is used in the proof of Lemma 3 in the extended version of this paper~\cite[Appendix~A]{Badings2022_AAAI_extended}.
\fi }%
\begin{assumption}
        \label{assump:nominal_nonsingular}
        The matrix $A(\hat \alpha)$ in \cref{eq:nominal_model} is non-singular.
\end{assumption}
\noindent
Assumption \cref{assump:nominal_nonsingular} is mild as it only requires the existence of a non-singular matrix in $\mathrm{conv}\{A_1, \ldots, A_r\}$.

\subsection*{States}

We create a partition of a subset $\partitionSpace$ of the safe set $\SafeSet$ on the continuous state space, see \cref{fig:Partition}.
This partition fully covers the goal set $\ReachSet$ but excludes unsafe states, i.e., any $\state \notin \SafeSet$.

\begin{definition}
    \label{def:Partition}
    A finite collection of subsets $(\region_i)_{i=1}^L$ is called a \emph{partition} of $\partitionSpace \subseteq \SafeSet \subset \R^n$ if the following conditions hold: 
    \begin{enumerate}
        \item $\partitionSpace = \bigcup_{i=1}^L \region_i$,
        \item $\region_i \bigcap \region_j = \emptyset, \,\, \forall i,j \in \{1,\ldots,L\}, \,\, i \neq j$.
    \end{enumerate}
\end{definition}

\noindent
We append to the partition a so-called \emph{absorbing region} $\region_0 = \mathrm{cl}(\R^n \setminus \partitionSpace)$, \change{which is defined as the closure of $\R^n \setminus \partitionSpace$} and represents any state $\state \notin \partitionSpace$ that is disregarded in subsequent reachability computations.
We consider partitions into convex polytopic regions, \change{which will allow us to compute \gls{PAC} probability intervals in \cref{lemma:prob_lower_bound} using results from~\citet{romao2022exact} on scenario optimization programs with discarded constraints}:

\begin{assumption}
    Each region $\region_i$ is a convex polytope given~by 
    \begin{align}
        \region_i &= \{ \state \in \amsmathbb{R}^n \colon H_i \state \leq h_i \}, 
        \label{eq:Polyhedron}
    \end{align}
    \change{with $H_i \in \amsmathbb{R}^{p_i \times n}$ and $h_i \in \amsmathbb{R}^{p_i}$ for some $p_i \in \N$,} and the inequality in \cref{eq:Polyhedron} is to be interpreted element-wise.
    \label{assump:Partitions}
\end{assumption}
\noindent
We define an \gls{iMDP} state for each element of $(\region_i)_{i=0}^L$, yielding a set of $L+1$ discrete states $\States = \{ s_i \mid i=0,\ldots,L \}$.
Define $\Map \colon \R^n \to \{0,1,\dots,L\}$ as the map from any $\state \in \partitionSpace$ to its corresponding region index $i$.
We say that a continuous state $\state$ belongs to \gls{iMDP} state $s_i$ if $\Map(\state) = i$.
State $s_0$ is a deadlock, such that the only transition leads back to $s_0$.

\subsection*{Actions}

Recall that we define the \gls{iMDP} actions via backward reachability computations under the nominal model in \cref{eq:nominal_model}.
Let $\targetSet = \{ \targetSet_1, \ldots, \targetSet_M \}$ be a finite collection of \emph{target sets}, each of which is a convex polytope, $\targetSet_\ell = \mathrm{conv}\{ t^1, \ldots, t^d \} \subset \R^n$.
Every target set corresponds to an \gls{iMDP} action, yielding the set $\Actions = \{ a_\ell \mid \ell=1,\ldots,M \}$ of actions.
Action $a_\ell \in \Actions$ represents a transition to $\hat{\state}_{k+1} \in \targetSet_\ell$ that is feasible under the nominal model.
The one-step \emph{backward reachable set} $\BackReach(\targetSet_\ell)$, shown in \cref{fig:backreachset}, represents precisely these continuous states from which such a direct transition to $\targetSet_\ell$ exists:
\begin{equation*}
    \BackReach(\targetSet_\ell)
	= \big\{ \state \in \R^n \, | \, \exists \control \in \cControlSpace, 
	\, A(\hat\alpha) \state + B(\hat\alpha) \control \in \targetSet_\ell
	\big\}.
\end{equation*}
Intuitively, we enable action $a_\ell$ in state $s_i$ only if $\hat{\state}_{k+1} \in \targetSet_\ell$ can be realized from any $\state_k \in \region_i$ in the associated region, or in other words, region $\region_i$ must be contained in the backward reachable set.
As such, we obtain the following definition:
\begin{definition}
    \label{def:Actions}
    Given a fixed $\hat\alpha \in \Gamma$, an action $a_\ell \in \Actions$ is enabled in a state $s_i \in \States$ if $\region_i \subseteq \BackReach(\targetSet_\ell)$. 
    The set $\Actions_{\hat\alpha}(s_i)$ of enabled actions in state $s_i$ under $\hat\alpha \in \Gamma$ is defined as
    \begin{equation}
        \Actions_{\hat\alpha}(s_i) = \big\{ a_\ell \in \Actions \colon \region_i \subseteq \BackReach(\targetSet_\ell) \big\}.
    \end{equation}
\end{definition}

\begin{figure}[t!]
	\centering
	\includegraphics[width = .85\columnwidth]{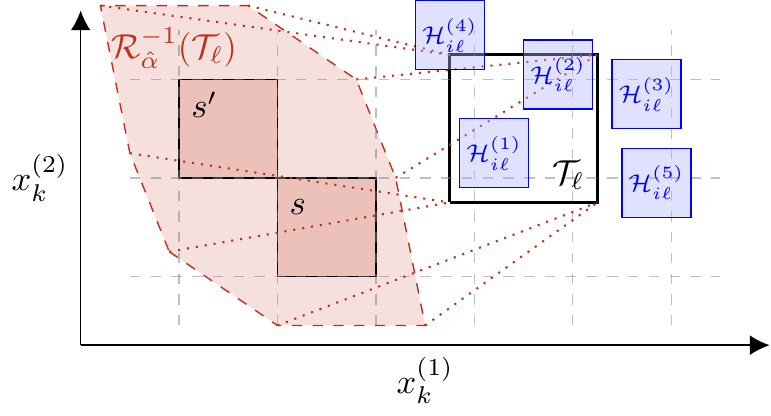}
	\caption{
	    Action $a_\ell$ to target set $\targetSet_\ell$ is enabled in states $s,s' \in \States$, as their regions are contained in the backward reachable set $\BackReach(\targetSet_\ell)$.
	    The successor state sets $\boxSample_{i\ell}^{(1)}, \ldots, \boxSample_{i\ell}^{(5)}$ for five noise samples, overapproximated as boxes, are shown in blue.
	}
	\label{fig:backreachset}
\end{figure}
\paragraph{Computing backward reachable sets.}
To apply \cref{def:Actions}, we must compute $\BackReach(\targetSet_\ell)$ for each action $a_\ell \in \Actions$ with associated target set $\targetSet_\ell$.
\change{
\ifappendix
    \cref{lemma:BackReachSet_Representation}, which is stated in \cref{sec:proofs} for brevity, shows
\else
    In the extended version of this paper~\cite[Appendix~A, Lemma 3]{Badings2022_AAAI_extended}, we show
\fi }%
that $\BackReach(\mathcal{T}_\ell)$ is a polytope characterized by the vertices of $\cControlSpace$ and $\targetSet_\ell$, which is computationally tractable to compute.

\subsection*{Transition Probability Intervals}

We wish to compute the probability $P(s_i,a_\ell)(s_j)$ that taking action $a_\ell \in \Actions$ in a continuous state $\state_k \in \region_i$ yields a successor state $\state_{k+1} \in \region_j$ that belongs to state $s_j \in \States$.
This conditional probability is defined using \cref{eq:AbstractionError} as
\begin{align}
    P(s_i, a_\ell)(s_j) 
    &= \amsmathbb{P} \{ \state_{k+1} \in \region_j \mid a_\ell \in \Actions_{\hat\alpha}(s_i) \}.
    \label{eq:transition_probability}
\end{align}
Note the outcome of an action $a_\ell$ is the same for any origin state $s_i$ in which $a_\ell$ is enabled.
In the remainder, we thus drop the conditioning on $a_\ell \in \Actions_{\hat\alpha}(s_i)$ in \cref{eq:transition_probability} for brevity.

Two factors prevent us from computing \cref{eq:transition_probability}: 1) the nominal successor state $\hat{\state}_{k+1}$ and the term $\delta(\alpha,\state_k,\control_k)$ are nondeterministic, and 2) the distribution of the noise $\noise_k$ is unknown.
We deal with the former in the following paragraph while addressing the stochastic noise in \cref{sec:PAC_intervals}.

\paragraph{Capturing nondeterminism.}
As a key step, we capture the nondeterminism caused by epistemic uncertainty.
First, recall that by construction, we have $\hat{\state}_{k+1} \in \targetSet_\ell$.
Second, we write the set $\Delta_i$ of all possible epistemic errors in \cref{eq:EpistemicUncertainty_def} as
\begin{equation}
    \Delta_i = \{ \delta(\alpha,\state_k,\control_k) \colon
    \alpha \in \Gamma, 
    \state_k \in \region_i,
    \control_k \in \cControlSpace \}.
    \label{eq:epistemic_error_set}
\end{equation}
Based on these observations, we obtain that the successor state $\state_{k+1}$ is an element of a set that we denote by $\boxSample_{i\ell}$:
\begin{equation}
    \state_{k+1} \in \targetSet_\ell + \Delta_i + \noise_k = \boxSample_{i\ell}.
    \label{eq:NextStateRepresentation}
\end{equation}
Crucially, we show in 
\change{
\ifappendix
    \cref{lemma:EpistemicError_characterization} (provided in \cref{sec:proofs} for brevity) 
\else
    Lemma 4 of the extended version of this paper~\cite[Appendix~A]{Badings2022_AAAI_extended}
\fi }%
that we can compute an \emph{overapproximation of $\Delta_i$} based on sets $\region_i$ and $\cControlSpace$, and the model dynamics.
Based on the set $\boxSample_{i\ell}$, we bound the probability in \cref{eq:transition_probability} as follows:
\begin{equation}
    \label{eq:probability_interval}
    \scalebox{0.97}{$
    \amsmathbb{P} \{ \boxSample_{i\ell} \subseteq \region_j \} \leq
    \amsmathbb{P} \{ \state_{k+1} \in \region_j \} \leq \amsmathbb{P} \{ \boxSample_{i\ell} \cap \region_j \neq \emptyset \}.
    $}
\end{equation}
Both inequalities follow directly by definition of \cref{eq:NextStateRepresentation}.
The lower bound holds, since if $\boxSample_{i\ell} \subseteq \region_j$, then $\state_{k+1} \in \region_j$ for any $\state_{k+1} \in \boxSample_{i\ell}$.
\change{The upper bound holds, since by \cref{eq:NextStateRepresentation} we have that $\state_{k+1} \in \boxSample_{i\ell}$, and thus, if $\state_{k+1} \in \region_j$, then the intersection $\boxSample_{i\ell} \cap \region_j$ must be nonempty.}

\section{PAC Probability Intervals via Sampling}
\label{sec:PAC_intervals} 

The interval in \cref{eq:probability_interval} still depends on the noise $\noise_k$, whose density function is unknown.
We show how to compute \gls{PAC} bounds on this interval, by sampling a set of $N \in \N$ samples of the noise, denoted by $\noise_k^{(1)}, \ldots, \noise_k^{(N)}$.
Recall from \cref{assump:iid_process} that these sample are \gls{iid} elements from $(\Omega, \mathcal{F}, \amsmathbb{P})$.
Each sample $\noise_k^{(\iota)}$ yields a set $\boxSample_{i\ell}^{(\iota)}$ (see \cref{fig:backreachset}) that contains the successor state under that value of the noise, i.e.,
\begin{align}
    &\state_{k+1}^{(\iota)} \in \targetSet_\ell + \Delta_i + \noise^{(\iota)}_k = \boxSample_{i\ell}^{(\iota)}.
    \label{eq:box_sample}
\end{align}
\noindent
\change{For reasons of computational performance, we overapproximate each set $\boxSample_{i\ell}^{(\iota)}$ as the smallest hyperrectangle in $\R^n$, by taking the point-wise min. and max. over the vertices of $\boxSample_{i\ell}^{(\iota)}$.}

\subsection*{Lower Bounds from the Scenario Approach}

We interpret the lower bound in \cref{eq:probability_interval} within the \textit{sampling-and-discarding} scenario approach~\cite{DBLP:journals/jota/CampiG11}.
Concretely, let $\discardedSet \subseteq \{1,\ldots,N\}$ be a subset of the noise samples and consider the following convex program:
\begin{equation}
\begin{split}
    \label{eq:scenario_problem}
    \mathfrak{L}_\discardedSet \colon
    \minimize_{\lambda \geq 0} & \enskip \lambda
    \\
    \mathrm{subject~to} & \enskip \boxSample_{i\ell}^{(\iota)} \subseteq \region_j(\lambda)
    \,\, \quad \forall \iota \in \discardedSet,
\end{split}
\end{equation}
where $\region_j(\lambda)$ is a version of $\region_j$ scaled by a factor $\lambda$ around an arbitrary point $x \in \region_j$, such that $\region_j(0) = x$, and $\region_j(\lambda_1) < \region_j(\lambda_2)$ for $\lambda_1 < \lambda_2$; see~\citet[Appendix~A]{Badings2022AAAI} for details.
The optimal solution $\lambda^\star_\discardedSet$ to $\mathfrak{L}_\discardedSet$ results in a region $\region_j(\lambda^\star_\discardedSet)$ such that, for all $\iota \in \discardedSet$, the set $\mathcal{H}_{i\ell}^{(\iota)}$ for noise sample $\noise_k^{(\iota)}$ is contained in $\mathcal{P}_j(\lambda^\star_\discardedSet)$.
We ensure that $\region_j(\lambda^\star_\discardedSet ) \subseteq \region_j$, by choosing $\discardedSet$ as the set of samples being a subset of $\region_j$, i.e.,
\begin{equation}
    R \coloneqq \big\{ \iota \in \{ 1,\ldots,N\} \colon \mathcal{H}_{i \ell}^{(\iota)} \subseteq \mathcal{P}_j \big\},
    \label{eq:set_R_lower_bound}
\end{equation}
We use the results in~\citet[Theorem~5]{romao2022exact} to lower bound the probability that a random sample $\noise_k \in \Omega$ yields $\boxSample_{i\ell} \subseteq \region_j(\lambda^\star_\discardedSet )$.
This leads to the following lower bound on the transition probability:\footnote{One can readily show that the technical requirements stated in~\citet{romao2022exact} are satisfied for the scenario program \cref{eq:scenario_problem}. Details are omitted here for brevity.}
\begin{lemma}
    \label{lemma:prob_lower_bound}
    Fix a region $\region_j$ and confidence probability $\beta \in (0,1)$.
    Given sets $(\boxSample_{i\ell}^{(\iota)})_{\iota=1}^{N}$, compute $\discardedSet$.
    Then, it holds that
    \begin{equation}
        \label{eq:prob_lower_bound}
        \amsmathbb{P}^N \Big\{ 
            \amsmathbb{P} \{ 
                \noise_k \in \Omega \colon \boxSample_{i\ell} \subseteq \region_j
            \} \geq \munderbar{p}
        \Big\} \geq 1 - \beta,
    \end{equation}
    \noindent
    where $\munderbar{p} = 0$ if $\vert R \vert = 0$, and otherwise, $\munderbar{p}$ is the solution to
    \begin{equation}
        \label{eq:prob_lower_bound2}
        \frac{\beta}{N} = \sum\nolimits_{i=0}^{N - \vert \discardedSet \vert} \binom Ni (1-\munderbar{p})^i \munderbar{p}^{N-i}.
    \end{equation}
\end{lemma}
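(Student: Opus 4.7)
The plan is to cast the event $\{\boxSample_{i\ell} \subseteq \region_j\}$ as the feasibility of a one-dimensional convex scenario program of the form \cref{eq:scenario_problem}, and then invoke the sampling-and-discarding result from~\citet[Theorem~5]{romao2022exact} to translate a sample-based feasibility count into a distribution-free lower bound on the true probability. The unknown distribution of $\noise_k$ never needs to be characterized: the only structure we use is that the $N$ samples are \gls{iid} (by \cref{assump:iid_process}), which is precisely the setting of the scenario approach.

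First I would verify the standing assumptions of the scenario theorem for the program $\mathfrak{L}_\discardedSet$. The decision variable is scalar ($\lambda \geq 0$) and the objective $\lambda$ is convex; each constraint $\boxSample_{i\ell}^{(\iota)} \subseteq \region_j(\lambda)$ is a convex constraint in $\lambda$ because $\region_j(\lambda)$ is a monotone scaling of a convex polytope around an interior point (so the constraint becomes a finite conjunction of linear inequalities in $\lambda$, cf.\ \cref{assump:Partitions}). Uniqueness of the optimum can be enforced by the standard lexicographic tie-breaking rule, and feasibility is immediate since $\lambda$ large enough always satisfies every inclusion.

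Next I would explain why the specific choice of discarded set in \cref{eq:set_R_lower_bound} is valid. By definition of $\discardedSet$, the kept samples $\{1,\dots,N\}\setminus\discardedSet$ are exactly those $\iota$ with $\boxSample_{i\ell}^{(\iota)} \not\subseteq \region_j$, so the optimal $\lambda^\star_\discardedSet$ of $\mathfrak{L}_\discardedSet$ satisfies $\region_j(\lambda^\star_\discardedSet) \subseteq \region_j$ (since $\lambda^\star_\discardedSet \leq 1$ by construction: $\lambda=1$ is feasible for the retained samples). Consequently,
\begin{equation*}
\amsmathbb{P}\{\noise_k \in \Omega \colon \boxSample_{i\ell} \subseteq \region_j(\lambda^\star_\discardedSet)\} \leq \amsmathbb{P}\{\noise_k \in \Omega \colon \boxSample_{i\ell} \subseteq \region_j\},
\end{equation*}
so any lower bound on the former is a lower bound on the latter. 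Applying~\citet[Theorem~5]{romao2022exact} to $\mathfrak{L}_\discardedSet$ with $N - \vert\discardedSet\vert$ discarded constraints yields, with $\amsmathbb{P}^N$-confidence at least $1-\beta$, the bound \cref{eq:prob_lower_bound2} defining $\munderbar{p}$.

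Finally, the boundary case $\vert R \vert = 0$ needs a separate one-line argument: there $\mathfrak{L}_\discardedSet$ has no retained constraints and one cannot extract any non-trivial lower bound from the samples, so the vacuous choice $\munderbar{p} = 0$ makes \cref{eq:prob_lower_bound} hold trivially. The main obstacle I anticipate is confirming that the technical prerequisites of~\citet[Theorem~5]{romao2022exact} (non-degeneracy, existence and uniqueness of the scenario solution, a support rank of one for this one-dimensional program) hold verbatim for $\mathfrak{L}_\discardedSet$; once this bookkeeping is done, \cref{eq:prob_lower_bound,eq:prob_lower_bound2} follow by direct substitution of the discarded-sample count $N - \vert\discardedSet\vert$ into their bound.
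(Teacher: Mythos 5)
Your overall route is the paper's: cast the event as feasibility of the scalar scenario program $\mathfrak{L}_\discardedSet$, check the prerequisites of \citet[Theorem~5]{romao2022exact}, use $\region_j(\lambda^\star_\discardedSet) \subseteq \region_j$ to transfer the bound from the scaled region to $\region_j$ itself, and read off the binomial tail with $N - \vert\discardedSet\vert$ discarded constraints. (One terminological slip: in $\mathfrak{L}_\discardedSet$ the constraints are \emph{enforced} for $\iota \in \discardedSet$, so $\discardedSet$ is the retained set and $\{1,\ldots,N\}\setminus\discardedSet$ is the discarded one; you call the latter the ``kept samples,'' although your subsequent count of $N-\vert\discardedSet\vert$ discarded constraints is the correct one.)

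The genuine gap is that you never account for the factor $\nicefrac{\beta}{N}$ on the left-hand side of \cref{eq:prob_lower_bound2}. A direct application of the scenario bound at confidence $1-\beta$ would give $\beta = \sum_{i=0}^{N-\vert\discardedSet\vert}\binom{N}{i}(1-\munderbar{p})^i\munderbar{p}^{N-i}$, which is not what the lemma states --- and it would in any case be unjustified, because the cardinality $\vert\discardedSet\vert$ is \emph{not fixed a priori}: it is determined only after observing the samples via \cref{eq:set_R_lower_bound}, whereas the sampling-and-discarding guarantee is stated for a prescribed number of removed constraints. The paper's proof resolves this by instantiating the scenario bound at confidence $\nicefrac{\beta}{N}$ for \emph{each} of the $N$ possible cardinalities $\vert\discardedSet\vert \in \{1,\ldots,N\}$ and then applying Boole's inequality over these $N$ events, so that whichever cardinality is realized, the corresponding bound holds with overall confidence at least $1-\beta$. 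This union bound is precisely what legitimizes the a posteriori choice of $\discardedSet$ and is the origin of the $\nicefrac{\beta}{N}$ in \cref{eq:prob_lower_bound2}; without it your argument establishes a different (and, for the data-dependent $\discardedSet$, unsupported) claim.
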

\noindent
More details and the proof of \cref{lemma:prob_lower_bound} are in 
\change{
\ifappendix
    \cref{sec:proofs}.
\else
    the extended version, \citet[Appendix~A]{Badings2022_AAAI_extended}.
\fi }%

\subsection*{Upper Bounds via Hoeffding's Inequality}

The scenario approach might lead to conservative estimates of the upper bound in \cref{eq:probability_interval}; 
\change{
\ifappendix
    see \cref{sec:proofs} for details.
\else
    see~\citet[Appendix~A]{Badings2022_AAAI_extended} for details.
\fi }%
Thus, we instead apply Hoeffding's inequality~\cite{Boucheron2013concentrationInequalities} to infer an upper bound $\bar{p}$ of the probability $\amsmathbb{P} \{ \boxSample_{i\ell} \cap \region_j \neq \emptyset \}$.
\change{Concretely, this probability describes the parameter of a Bernoulli random variable, which has value $1$ if $\boxSample_{i\ell} \cap \region_j \neq \emptyset$ and $0$ otherwise.}
The sample sum $\tilde{\discardedSet}$ of this random variable is given by the number of sets $\boxSample_{i \ell}^{(\iota)}$ that intersect with region $\region_j$, i.e.,
\begin{equation}
        \tilde{\discardedSet} \coloneqq \big\{ \iota \in \{ 1,\ldots,N \} \colon \mathcal{H}_{i \ell}^{(\iota)} \cap \mathcal{P}_j \neq \emptyset \big\}.
        \label{eq:discardedSet}
\end{equation}
Using Hoeffding's inequality, we state the following lemma to bound the upper bound transition probability in \cref{eq:probability_interval}.
\begin{lemma}
    \label{lemma:prob_upper_bound}
    Fix region $\region_j$ and confidence probability $\beta \in (0,1)$.
    Given sets $(\boxSample_{i\ell}^{(\iota)})_{\iota=1}^{N}$, compute $\tilde{\discardedSet}$.
    Then, it holds that
    \begin{equation}
        \label{eq:prob_upper_bound}
        \amsmathbb{P}^N \Big\{ 
            \amsmathbb{P} \{ 
                \noise_k \in \Omega 
                \colon
                \boxSample_{i\ell} \cap \region_j \neq \emptyset
            \} \leq \bar{p}
        \Big\} \geq 1 - \beta,
    \end{equation}
    where the upper bound $\bar{p}$ is computed as
    \begin{equation}
        \label{eq:prob_upper_bound2}
        \bar{p} = \min \left\{ 1, \enskip \frac{\tilde{\discardedSet}}{N} + \sqrt{\frac{1}{2N} \log\Big(\frac{1}{\beta}\Big)} \right\}.
    \end{equation}
\end{lemma}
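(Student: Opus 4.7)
The plan is to view each noise sample as producing an independent Bernoulli trial indicating whether the associated successor-state set intersects the target region, and then apply a one-sided version of Hoeffding's inequality to the empirical mean. Concretely, for each $\iota \in \{1,\ldots,N\}$ define the indicator random variable $X_\iota = \mathbbm{1}\{\boxSample_{i\ell}^{(\iota)} \cap \region_j \neq \emptyset\}$. Because the $\noise_k^{(\iota)}$ are \gls{iid} by \cref{assump:iid_process} and $\boxSample_{i\ell}^{(\iota)}$ is a deterministic function of $\noise_k^{(\iota)}$ (given the fixed region $\region_i$, target $\targetSet_\ell$, and epistemic error set $\Delta_i$), the $X_\iota$ are \gls{iid} Bernoulli random variables. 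Their common success probability is exactly $p \coloneqq \amsmathbb{P}\{ \noise_k \in \Omega \colon \boxSample_{i\ell} \cap \region_j \neq \emptyset \}$, which is the quantity we aim to upper-bound.

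Next, observe that the sample-sum cardinality is $\vert \tilde{\discardedSet} \vert = \sum_{\iota=1}^{N} X_\iota$, so the empirical mean coincides with $\vert \tilde{\discardedSet} \vert / N$. Since each $X_\iota$ takes values in $[0,1]$, Hoeffding's inequality applied one-sidedly gives, for every $t > 0$,
\begin{equation*}
    \amsmathbb{P}^N \Bigl\{ p - \tfrac{1}{N}\sum\nolimits_{\iota=1}^N X_\iota \geq t \Bigr\} \,\leq\, \exp(-2N t^2).
\end{equation*}
Setting the right-hand side equal to $\beta$ and solving for $t$ yields $t = \sqrt{\tfrac{1}{2N}\log(1/\beta)}$. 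Taking complements, with probability at least $1-\beta$ over the sample we have $p \leq \vert \tilde{\discardedSet}\vert /N + \sqrt{\tfrac{1}{2N}\log(1/\beta)}$. Finally, since $p$ is a probability, it is trivially bounded above by $1$, so we may intersect the bound with $1$ without losing validity, recovering $\bar{p}$ as stated in \cref{eq:prob_upper_bound2}.

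The argument is almost entirely mechanical once the indicator encoding is set up; the only subtlety worth writing out carefully is the justification that $X_1,\ldots,X_N$ are genuinely \gls{iid} Bernoulli with parameter $p$, i.e., that conditioning on the deterministic objects $\region_i$, $\targetSet_\ell$ and the (deterministically computed) set $\Delta_i$ does not couple the samples. The expected main obstacle, if any, is to be explicit that we use the one-sided form of Hoeffding (not the two-sided version with a factor of $2$ in the exponent), because using the two-sided bound would yield a slightly looser $\bar{p}$ with $\log(2/\beta)$ instead of $\log(1/\beta)$. I would therefore state and cite the one-sided variant explicitly before plugging in $t$.
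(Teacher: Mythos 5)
Your proof is correct and follows essentially the same route as the paper's: encode each sample as a Bernoulli indicator of intersection with $\region_j$, apply the one-sided Hoeffding bound with deviation $t=\sqrt{\tfrac{1}{2N}\log(1/\beta)}$, and clip at $1$. Your explicit justification that the indicators are \gls{iid} with parameter $p$, and your remark on using the one-sided rather than two-sided form, are details the paper glosses over but do not change the argument.
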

\noindent
\change{
\ifappendix
    More details and the proof of \cref{lemma:prob_upper_bound} are in \cref{sec:proofs}.
\else
    We provide the proof in~\citet[Appendix~A]{Badings2022_AAAI_extended}.
\fi }%

\subsection*{Probability Intervals with PAC Guarantees}

We apply \cref{lemma:prob_lower_bound,lemma:prob_upper_bound} as follows to compute a \gls{PAC} probability interval for a specific transition of the \gls{iMDP}.

\begin{theorem}[\gls{PAC} probability interval]
    \label{theorem:probability_interval}
    Fix a region $\region_j$ and a confidence probability $\beta \in (0,1)$.
    For the collection $(\boxSample_{i\ell}^{(\iota)})_{\iota=1}^{N}$, compute $\munderbar{p}$ and $\bar{p}$ using \cref{lemma:prob_lower_bound,lemma:prob_upper_bound}.
    Then, the transition probability $P(s_i, a_\ell)(s_j)$ is bounded by
    \begin{equation}
        \amsmathbb{P}^N \Big\{ \munderbar{p} \leq P(s_i, a_\ell)(s_j) \leq \bar{p} \Big\} \geq 1 - 2 \beta.
        \label{eq:TheoremBounds}
    \end{equation}
\end{theorem}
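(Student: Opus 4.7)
The plan is to combine the two one-sided \gls{PAC} bounds from \cref{lemma:prob_lower_bound,lemma:prob_upper_bound} via a standard union bound argument over the two (independent but jointly evaluated) confidence events, using the deterministic sandwich inequality \cref{eq:probability_interval} as the bridge between the sample-based bounds and the actual transition probability $P(s_i, a_\ell)(s_j)$.

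First, I would recall that \cref{eq:probability_interval} gives, deterministically (i.e., without any sampling involved), the two-sided enclosure
\begin{equation*}
    \amsmathbb{P}\{\boxSample_{i\ell} \subseteq \region_j\}
    \,\leq\, P(s_i,a_\ell)(s_j) \,\leq\,
    \amsmathbb{P}\{\boxSample_{i\ell} \cap \region_j \neq \emptyset\}.
\end{equation*}
This is crucial: once I have high-probability bounds on the outer two quantities (as functions of the $N$ i.i.d.\ noise samples), the same bounds automatically transfer to $P(s_i,a_\ell)(s_j)$ since the sandwich holds pointwise in the sample outcome.

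Next, I would introduce the two events defined on the sample space $\Omega^N$:
\begin{equation*}
    E_{\munderbar{p}} \coloneqq \big\{ \amsmathbb{P}\{\boxSample_{i\ell} \subseteq \region_j\} \geq \munderbar{p} \big\},
    \quad
    E_{\bar{p}} \coloneqq \big\{ \amsmathbb{P}\{\boxSample_{i\ell} \cap \region_j \neq \emptyset\} \leq \bar{p} \big\}.
\end{equation*}
By \cref{lemma:prob_lower_bound} we have $\amsmathbb{P}^N(E_{\munderbar{p}}) \geq 1 - \beta$, and by \cref{lemma:prob_upper_bound} we have $\amsmathbb{P}^N(E_{\bar{p}}) \geq 1 - \beta$. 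On the intersection $E_{\munderbar{p}} \cap E_{\bar{p}}$, the sandwich inequality immediately yields $\munderbar{p} \leq P(s_i,a_\ell)(s_j) \leq \bar{p}$.

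Finally, I apply the union bound (Boole's inequality) to the complements: $\amsmathbb{P}^N(E_{\munderbar{p}}^c \cup E_{\bar{p}}^c) \leq \amsmathbb{P}^N(E_{\munderbar{p}}^c) + \amsmathbb{P}^N(E_{\bar{p}}^c) \leq 2\beta$, and therefore $\amsmathbb{P}^N(E_{\munderbar{p}} \cap E_{\bar{p}}) \geq 1 - 2\beta$. This establishes \cref{eq:TheoremBounds}. There is no serious technical obstacle here, since all the heavy lifting — scenario-approach guarantees for the lower bound and Hoeffding concentration for the upper bound — is already encapsulated in the two preceding lemmas; the only subtlety worth noting is that the union bound does not require independence between $E_{\munderbar{p}}$ and $E_{\bar{p}}$, which is important because both events are constructed from the \emph{same} sample $\noise_k^{(1)}, \ldots, \noise_k^{(N)}$ and hence are not independent in general.
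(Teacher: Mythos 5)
Your proposal is correct and follows exactly the paper's own argument: the deterministic sandwich inequality in \cref{eq:probability_interval} transfers the two one-sided sample-based bounds from \cref{lemma:prob_lower_bound,lemma:prob_upper_bound} to $P(s_i,a_\ell)(s_j)$, and Boole's inequality on the two failure events gives the $1-2\beta$ confidence. Your remark that the union bound needs no independence between the two events (both built from the same samples) is a worthwhile clarification that the paper leaves implicit.
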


\begin{proof}
\cref{theorem:probability_interval} follows directly by combining \cref{lemma:prob_lower_bound,lemma:prob_upper_bound} via the union bound\footnote{\change{The union bound (Boole's inequality) states that the probability that at least one of a finite set of events happens, is upper bounded by the sum of these events' probabilities~\cite{casella2021statistical}.}} with the probability interval in \cref{eq:probability_interval}, which asserts that these bounds are both correct with a probability of at least $1-2\beta$.
\end{proof}

\paragraph{Counting samples.}
The inputs to \cref{theorem:probability_interval} are the sample counts $R$ (fully contained in $\region_j$) and $\tilde{\discardedSet}$ (at least partially contained in $\region_j$), and the confidence probability $\beta$.
Thus, the problem of computing probability intervals reduces to a \emph{counting problem} on the samples.
\change{
\ifappendix
    In \cref{sec:approximate_sample_counting}, 
\else
    In \citet[Appendix~B]{Badings2022_AAAI_extended},
\fi }%
we describe a procedure that significantly speeds up this counting process in a sound way by merging (and overapproximating) sets $\boxSample^{(\iota)}$ that are very similar.
\begin{figure}[t!]
	\centering
	\usetikzlibrary{calc, arrows, arrows.meta, shapes, positioning, shapes.geometric}

\tikzstyle{nodewide} = [rectangle, rounded corners, minimum width=3.6cm, text width=3.3cm, minimum height=1.1cm, text centered, draw=black, fill=plotblue!15]
\tikzstyle{nodewide_red} = [rectangle, rounded corners, minimum width=3.6cm, text width=3.3cm, minimum height=1.1cm, text centered, draw=black, ultra thick, fill=red!25]

\resizebox{\linewidth}{!}{%
	\begin{tikzpicture}[node distance=6.0cm,->,>=stealth,line width=0.3mm,auto,main node/.style={circle,draw,font=\sffamily\bfseries}]
		
	\newcommand\yshift{-3.8cm}
	
	\node (dyn) [nodewide_red] 
	{\textbf{Dynamical model} \\ 
	\cref{eq:uncertain_LTI_model}, unknown $\alpha$
	};
	
	\node (abstract) [nodewide, left of=dyn, xshift=0.2cm] 
	{\textbf{Abstract iMDP} \\ $\imdp = \iMDP$};
	
	\node (guarantees) [nodewide, below of=abstract, yshift=3.8cm] 
	{\textbf{Guarantee on iMDP} \\ 
	$\reachProbDiscrRobustStar$};
	
	\node (policy) [nodewide, below of=dyn, yshift=3.8cm] 
	{\textbf{Stored tabular policy} \\ $\pi^\star \colon \States \times \N \to \Actions$ 
	 };
	
	
	\node (in1) [left of=abstract, xshift=1.7cm] {};
	\node (in2) [left of=guarantees, xshift=1.7cm] {};
	
	\draw [->] (in1) -- (abstract) node [midway, above, align=center, font=\sffamily\small] {Partition $(\region_i)_{i=1}^L$} node [midway, below, align=center, font=\sffamily\small] {Confidence $\beta$}; %
	
	\draw [->] (in2) -- (guarantees) node [midway, below, align=center, font=\sffamily\small] {Reach-avoid \\ specification};
	
	\draw [->] (dyn) -- (abstract) node [midway, above, align=center, font=\sffamily\small] {Create \\ abstraction};
	
	
	\draw [->] ($(abstract.south) + (-0.2,0)$) -- ($(guarantees.north) + (-0.2,0)$) node [midway, left, align=center, font=\sffamily\small] {Compute robust \\ optimal policy};
	
	\draw [->, dashed] ($(guarantees.north) + (0.2,0)$) -- ($(abstract.south) + (0.2,0)$) node [midway, right, align=center, font=\sffamily\small] {$\underline{Pr}^{\policy^\star} < \threshold$ \ \rured{\faIcon{times}} \\ Improve abstraction};
	
	\draw [->] (guarantees) -- (policy) 
	node [midway, above, align=center, font=\sffamily\small] {$\underline{Pr}^{\policy^\star} \geq \threshold$ \ \darkgreen{\faIcon{check}}}
	node [midway, below, align=center, font=\sffamily\small] {Extract \\ policy $\policy^\star$};
	
	\draw [->] ($(dyn.south) + (-0.2,0)$) -- ($(policy.north) + (-0.2,0)$) node [midway, left, align=center, font=\sffamily\small] {State $\state_k$};
	
	\draw [->] ($(policy.north) + (0.2,0)$) -- ($(dyn.south) + (0.2,0)$) node [midway, right, align=center, font=\sffamily\small] {Compute $\control_k$ \\ (solve \cref{eq:ControlLaw_opt})};
	
	\draw [-,decorate,decoration={brace,amplitude=5pt,mirror,raise=2ex}] (-3.9cm, 0.5cm) -- (-7.7cm, 0.5cm)
	node[midway,yshift=3em]{\textbf{Offline planning}};
	\draw [-,decorate,decoration={brace,amplitude=5pt,mirror,raise=2ex}] (1.9cm, 0.5cm) -- (-1.9cm, 0.5cm)
	node[midway,yshift=3em]{\textbf{Online control}};
		
	\end{tikzpicture}
}
	\caption{
		Our overall approach to solve the formal problem based on the abstraction method outlined in \cref{sec:Abstractions,sec:PAC_intervals}.
	}
	\label{fig:Approach}
\end{figure}
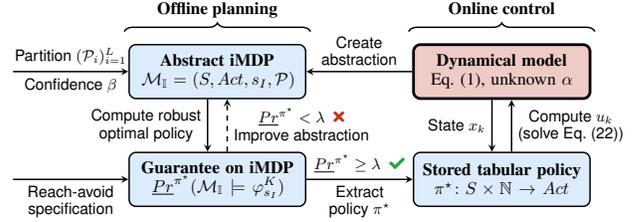

\section{Overall Abstraction Method}
\label{sec:Algorithm}

We provide an algorithm to solve the formal problem based on the proposed abstraction method.
The approach is shown in \cref{fig:Approach} and consists of an \emph{offline planning} phase, in which we create the \gls{iMDP} and compute a robust optimal policy, and an \emph{online control} phase in which we automatically derive a provably-correct controller for the continuous model. 

\paragraph{1) Create abstraction.}
Given a model as in \cref{eq:uncertain_LTI_model}, a partition $(\region_i)_{i=1}^L$ of the state space as defined by \cref{def:Partition}, and a confidence level $\beta \in (0,1)$ as inputs, we create an abstract \gls{iMDP} by applying the techniques presented in \cref{sec:Abstractions,sec:PAC_intervals}.

\paragraph{2) Compute robust optimal policy.}
We compute a robust optimal policy $\policy^\star$ for the \gls{iMDP} using \cref{eq:optimalPolicyRobust}.
Recall that the problem is to find a controller together with a lower bound $\lambda$ on the reach-avoid probability.
If this condition holds, we output the policy and proceed to step 3; otherwise, we attempt to improve the abstraction in one of the following ways.

First, we can refine the partition at the cost of a larger \gls{iMDP}, as shown in \cref{sec:Experiments}.
Second, using more samples $N$ yields an improved \gls{iMDP} through tighter intervals (see, e.g.,~\citet{Badings2022AAAI} for such trade-offs). 
Finally, the uncertainty in $\alpha \in \Gamma$ may be too large, meaning we need to reduce set $\Gamma$ using learning techniques (see the related work).

\paragraph{3) Online control.}
The stored policy is a time-varying map from \gls{iMDP} states to actions.
Recall that an action $a_\ell \in \Actions$ corresponds to applying a control $\control_k$ such that the nominal state $\hat{\state}_{k+1} \in \targetSet_\ell$.
In view of \cref{eq:nominal_model} and \cref{def:Actions}, such a $\control_k \in \cControlSpace$ for the original model \textit{exists by construction} and is obtained as the solution to the following convex optimization program:
\begin{equation}
\begin{split}
    c_\ell(\state_k) = & \argmin_{u \in \cControlSpace} \quad \| A(\hat\alpha) \state_k + B(\hat\alpha) u - \tilde{t}_\ell \|_2
    \label{eq:ControlLaw_opt}
    \\
    & \text{subject to} \quad A(\hat\alpha) \state_k + B(\hat\alpha) u \in \targetSet_\ell,
\end{split}
\end{equation}
\change{with $\tilde{t}_\ell \in \targetSet_\ell$ a representative point, which indicates a point to which we want to steer the nominal state (in practice, we choose $\tilde{t}_\ell$ as the center of $\targetSet_\ell$, but the theory holds for any such point).}
Thus, upon observing the current continuous state $\state_k$, we determine the optimal action $a_\ell$ in the corresponding \gls{iMDP} state and apply the control input $\control_k = c_\ell(\state_k) \in \cControlSpace$.

\subsection*{Correctness of the Abstraction}

We lift the confidence probabilities on individual transitions obtained from \cref{theorem:probability_interval} to a correctness guarantee on the whole \gls{iMDP}.
The following theorem states this key result: 
\begin{theorem}[\change{Correctness of the iMDP}]
    \label{theorem:correctness}
    Generate an \gls{iMDP} abstraction $\imdp$ and compute the robust reach-avoid probability $\reachProbDiscrRobustStar$ under optimal policy $\policy^\star$.
    Under the controller $c$ defined by \cref{eq:ControlLaw_opt} for each $k \leq K$, it holds that
    \begin{equation}
        \label{eq:correctness}
        \amsmathbb{P} \left\{ V(\state_0, \alpha, c) \geq \reachProbDiscrRobustStar \right\} \geq 1 - 2\beta L M.
    \end{equation}
\end{theorem}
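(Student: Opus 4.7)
The plan is to split the proof into two stages: first lift the per-transition \gls{PAC} guarantee of \cref{theorem:probability_interval} to a simultaneous guarantee on the whole \gls{iMDP} via a union bound, and then, conditional on all intervals being correct, argue that $\reachProbDiscrRobustStar$ lower-bounds $V(\state_0, \alpha, c)$ uniformly in $\alpha \in \Gamma$.

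For the first stage, I would observe that, as noted after \cref{eq:transition_probability}, the successor distribution under action $a_\ell$ does not depend on the origin state, so the collection of non-trivial transition-probability intervals in $\transfuncImdp$ is indexed by pairs $(a_\ell, s_j)$ with $\ell \in \{1,\ldots,M\}$ and $j \in \{1,\ldots,L\}$, giving at most $L \cdot M$ intervals (the absorbing state contributes a deterministic self-loop and needs no interval). \cref{theorem:probability_interval} asserts that each of them fails to contain the true transition probability with probability at most $2\beta$, so by Boole's inequality (the union bound), with probability at least $1 - 2\beta L M$ over the sample draw, every interval in $\transfuncImdp$ is sound.

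For the second stage, I would condition on the event that $\imdp$ is sound and fix any $\alpha \in \Gamma$. Whenever $\policy^\star$ prescribes action $a_\ell$ in a state $s_i$ and the continuous state satisfies $\state_k \in \region_i$, the control $c_\ell(\state_k)$ obtained from \cref{eq:ControlLaw_opt} exists by \cref{def:Actions} and steers the nominal successor into $\targetSet_\ell$. By \cref{eq:AbstractionError,eq:epistemic_error_set,eq:NextStateRepresentation}, the true successor $\state_{k+1}$ then lies in $\targetSet_\ell + \Delta_i + \noise_k \subseteq \boxSample_{i\ell}$ for \emph{any} $\alpha$, so the induced discrete transition function $P_\alpha$ satisfies $P_\alpha \in \transfuncImdp$. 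The min-max formulation of \cref{eq:optimalPolicyRobust} then guarantees that $\policy^\star$ attains reach-avoid probability at least $\reachProbDiscrRobustStar$ on the instantiated MDP $\imdp[P_\alpha]$. Since the partition covers $\partitionSpace \subseteq \SafeSet$ and goal \gls{iMDP} states correspond to regions contained in $\ReachSet$, every discrete path satisfying $\propertyMDP$ lifts to a continuous trajectory satisfying the reach-avoid objective, yielding $V(\state_0, \alpha, c) \geq \reachProbDiscrRobustStar$. Combining both stages gives \cref{eq:correctness}.

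The main obstacle is showing that a \emph{single} abstraction, built from one batch of noise samples, remains a valid interval model \emph{simultaneously} for every $\alpha \in \Gamma$ and every $\state_k \in \region_i$. This uniformity is what lets us quantify ``for all $\alpha$'' as required by the formal problem, and it rests on the nominal-model decomposition \cref{eq:AbstractionError}, which decouples the sampled noise $\noise_k$ from the epistemic parameter $\alpha$, together with the fact that $\Delta_i$ overapproximates the epistemic error across all admissible $(\alpha, \state_k, \control_k) \in \Gamma \times \region_i \times \cControlSpace$. Once this uniformity is established, the rest reduces to standard soundness of robust value iteration on interval MDPs, and in particular the $L \cdot M$ factor does not blow up with the unknown parameter.
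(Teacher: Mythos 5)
Your proposal is correct and follows essentially the same route as the paper's proof: a union bound over the at most $L \cdot M$ distinct transition-probability intervals (using that the successor distribution is independent of the origin state), combined with the observation that, whenever all intervals are sound, the true transition function induced by any $\alpha \in \Gamma$ is an instantiation $P \in \transfuncImdp$, so the robust optimum lower-bounds $V(\state_0,\alpha,c)$. Your second stage is in fact spelled out in more detail than the paper's, which simply asserts the conditional implication.
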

\noindent The proof of \cref{theorem:correctness}, 
\change{
\ifappendix
    which we provide in \cref{sec:proofs}, 
\else
    which we provide in \citet[Appendix~A]{Badings2022_AAAI_extended},
\fi }%
uses the union bound with the fact that the \gls{iMDP} has at most $LM$ unique probability intervals.
By tuning $\beta \in (0,1)$, we thus obtain an abstraction that is correct with a user-specified confidence level of $1-\tilde{\beta} = 1-2\beta LM$.

\change{Crucially, we note that \cref{theorem:correctness} is, with a probability of at least $1-2\beta L M$, a solution to the formal problem stated in \cref{sec:Problem} with threshold $\lambda = \reachProbDiscrRobustStar$.}

\paragraph{Sample complexity.}
\change{The required sample size $N$ depends logarithmically on the confidence level, cf. \cref{lemma:prob_lower_bound,lemma:prob_upper_bound}.}
Moreover, the number of unique intervals is often lower than the worst-case of $LM$, so the bound in  \cref{theorem:correctness} can be conservative.
In particular, the number of intervals only depends on the state and action definitions of the \gls{iMDP}, and is thus observable before we apply \cref{theorem:probability_interval}.
To compute less conservative probability intervals, we replace $LM$ in \cref{eq:correctness} with the observed number of unique intervals. 

\subsection*{Uncertain Additive Disturbance}

\change{We extend the generality of our models with an additive parameter representing an external disturbance $\disturbance_k \in \mathcal{Q}$} that, in the spirit of this paper, belongs to a convex set $\mathcal{Q} \subset \R^n$~\cite{blanchini2008set}. 
The resulting model is
\begin{equation}
    \state_{k+1} = A(\alpha) \state_k + B(\alpha) \control_k + \disturbance_k + \noise_k.
    \label{eq:uncertain_LTI_model_disturbance}
\end{equation}
\change{This additional parameter $\disturbance_k$ models uncertain disturbances \emph{that are not stochastic} (and can thus not be captured by $\noise_k$), and that are independent of the state $\state_k$ and control $\control_k$ (and can thus not be captured in $A(\alpha)$ or $B(\alpha)$).}
To account for parameter $q_k$ (which creates another source of epistemic uncertainty), we expand \cref{eq:EpistemicUncertainty_def} and 
\change{
\ifappendix
    \cref{lemma:EpistemicError_characterization}
\else
    Lemma 4 of \citet[Appendix~A]{Badings2022_AAAI_extended},
\fi }%
\emph{to be robust against any $\disturbance_k \in \mathcal{Q}$}.
While this extension increases the size of the sets $\boxSample_{i\ell}^{(\iota)}, \iota = 1,\ldots,N$, the procedure outlined in \cref{sec:PAC_intervals} to compute probability intervals remains the same.

\paragraph{Generality of the model.}
The parameter $q_k$ expands the applicability of our approach significantly.
Consider, e.g., a building temperature control problem, where only the temperatures of adjacent rooms affect each other.
We can decompose the model dynamics into the individual rooms by capturing any possible influence between rooms into $\disturbance_k\in\mathcal{Q}$, as is common in assume-guarantee reasoning~\cite{DBLP:conf/cav/BobaruPG08}.
We apply this extension to a large building temperature control problem in \cref{sec:Experiments}.
\section{Numerical Experiments}
\label{sec:Experiments}

We perform experiments to answer the question: \textquotedblleft\emph{Can our method synthesize controllers that are robust against epistemic uncertainty in parameters?}\textquotedblright\,
In this section, we focus on problems from motion planning and temperature control, and we discuss an additional experiment on a variant of the automated anesthesia delivery benchmark 
from~\citet{DBLP:conf/adhs/AbateBCHHLOSSVV18} 
\change{
\ifappendix
    in \cref{sec:details_experiments}.
\else
    in the extended version of this paper~\cite[Appendix~C]{Badings2022_AAAI_extended}.
\fi }%
All experiments ran single-threaded on a computer with $32$ $3.7$GHz cores and $64$GB RAM.
A Python implementation of our approach is available at \color{Sepia}\url{https://github.com/LAVA-LAB/DynAbs}\color{black}, using the probabilistic model checker PRISM~\cite{DBLP:conf/cav/KwiatkowskaNP11} to compute optimal \gls{iMDP} policies.

\begin{figure}[t!]
    \centering
    \includegraphics[width = \columnwidth]{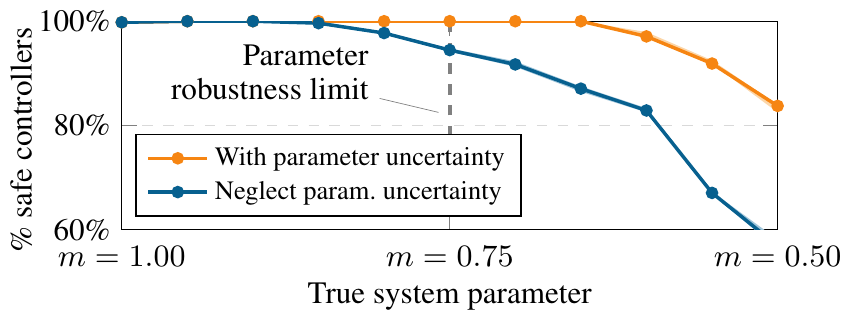}
    \caption{Percentage of initial states with safe performance guarantees (i.e., the simulated reach-avoid probability is above the optimum of \cref{eq:optimalPolicyRobust} on the \gls{iMDP}).
    Our approach that accounts for epistemic uncertainty is \emph{100\% safe up to the parameter robustness limit}; neglecting uncertainty is not.}    
    \label{fig:harmonic_oscillator_plot}
\end{figure}

\begin{figure}[t!]
    \centering
    \includegraphics[width = .9\columnwidth]{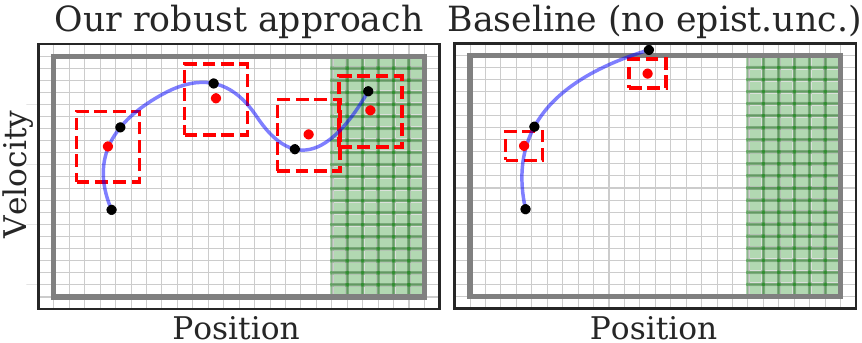}
    \caption{With our approach, the system safely reaches the goal (in green), while the baseline neglecting epistemic uncertainty leaves the safe set (gray box), as it underestimates the successor state sets $\boxSample_{i\ell}$ defined in \cref{eq:NextStateRepresentation} (red boxes).
    }    
    \label{fig:harmonic_oscillator_traces}
\end{figure}

\subsection*{Longitudinal Drone Dynamics}

We revisit \cref{ex:Model} of a drone with an uncertain mass $m \in [0.75, 1.25]$.
We fix the nominal value of the mass as $\hat{m} = 1$.
To purely show the effect of epistemic uncertainty, we set the covariance of the aleatoric uncertainty in $\noise_k$ (being a Gaussian distribution) to almost zero.
The specification is to reach a position of $p_k \geq 8$ before time $K = 12$, while avoiding speeds of $\vert v_k \vert \geq 10$.
\change{Thus, the safe set is $\SafeSet = [-\infty, \infty] \times [-10, 10]$, of which we create a partition covering $\partitionSpace = [-10, 14] \times [-10, 10]$ and into $24 \times 20$ regions.}
We use $20$K samples of the noise to estimate probability intervals.
We compare against a baseline that builds an \gls{iMDP} for the nominal model only, thus neglecting parameter uncertainty.

\paragraph{Neglecting epistemic uncertainty is unsafe.}
\change{
We solve the formal problem stated in \cref{sec:Problem} for every initial state $\state_0 \in \partitionSpace$, resulting in a threshold $\lambda$ for each of those states.}
The run time for solving this benchmark is around $\SI{3}{\second}$.
\change{For each $\state_0$, we say that the controller $c$ at a parameter value $\alpha \in \Gamma$ is \emph{unsafe},} if the reach-avoid probability $V(\state_0, \alpha, c)$ (estimated using Monte Carlo simulations) is below $\lambda = \reachProbDiscrRobustStar$ on the \gls{iMDP}, as per \cref{eq:optimalPolicyRobust}.
In \cref{fig:harmonic_oscillator_plot}, we show the deviation of the actual mass $m$ from its nominal value, versus the average percentage of states with a safe controller (over 10 repetitions).
The \emph{parameter robustness limit} represents the extreme values of the parameter against which our approach is guaranteed to be robust ($m=0.75$ and $1.25$ in this case).

Our approach yields \emph{100\% safe controllers} for deviations well over the robustness limit, while the baseline yields \emph{6\% unsafe controllers} at this limit.
We show simulated trajectories under an actual mass $m=0.75$ in \cref{fig:harmonic_oscillator_traces}.
These trajectories confirm that our approach safely reaches the goal region while the baseline does not, as it neglects epistemic uncertainty, which is equivalent to assuming $\Delta_i = 0$ in \cref{eq:NextStateRepresentation}.

\change{
\paragraph{Multiple uncertain parameters.}
To show that our contributions hold independently of the uncertain parameter, we consider a case in which, in addition to the uncertain mass, we have an uncertain friction coefficient.
The results of this experiment, presented
\change{
\ifappendix
    in \cref{sec:details_experiments},
\else
    in~\citet[Appendix~C]{Badings2022_AAAI_extended},
\fi }%
show that we obtain controllers with similar correctness guarantees, irrespective of the number of uncertain~parameters.}

\subsection*{Building Temperature Control}

We consider a temperature control problem for a 5-room building with Gaussian process noise, each with a dedicated radiator that has an uncertain power output of $\pm 10\%$ around its nominal value 
\change{
\ifappendix
    (see \cref{sec:details_experiments} for modeling details).
\else
    (see~\citealt[Appendix~C]{Badings2022_AAAI_extended} for details).
\fi }%
The $10$D state of this model captures the temperatures of 5 rooms and 5 radiators.
The goal is to maintain a temperature within $21 \pm \SI{2.5}{\celsius}$ for $15$ steps of $\SI{20}{\minute}$.  

\change{
\paragraph{Interactions between rooms as nondeterminism.}
Since a direct partitioning of the $10$D state space is infeasible, we use the procedure from \cref{sec:Algorithm} to capture \emph{any possible thermodynamic interaction} between rooms in the uncertain parameter $\disturbance_k \in \mathcal{Q}$.
In particular, we show in 
\ifappendix
    \cref{sec:details_experiments}
\else
    \citet[Appendix~C]{Badings2022_AAAI_extended}
\fi %
that the set $\mathcal{Q}_i$ for room $i \in \{1,\ldots,5\}$ is characterized by the maximal temperature difference between room $i$ and all adjacent rooms.
For the partition used in this particular reach-avoid problem, this maximal temperature difference is $23.5 - 18.5 = \SI{5}{\celsius}$.
Following this procedure, we can easily derive a set-bounded representation of $\mathcal{Q}$, allowing us to decouple the dynamics into the individual rooms.
}

\begin{figure}[t!]
    \centering
    \includegraphics[width = \columnwidth]{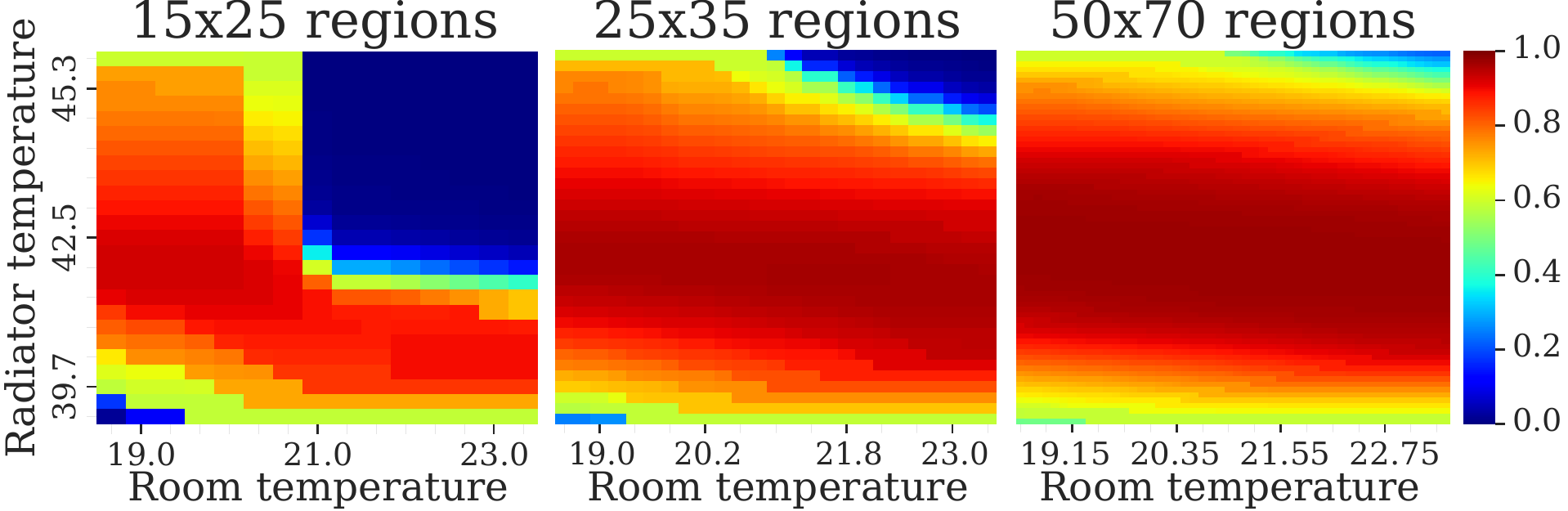}
    \caption{
    Specification satisfaction probabilities on the \gls{iMDP} for a single room of the temperature control problem.}
    \label{fig:temperature_control_heatmaps}
\end{figure}

\paragraph{Refining partitions improves results.}
\change{
We apply our method with an increasingly more fine-grained state-space partition.
In \cref{fig:temperature_control_heatmaps}, we present, for three different partitions, the thresholds $\lambda$ of satisfying the specification under the robust optimal \gls{iMDP} policy as per \cref{eq:optimalPolicyRobust}, from any initial state $\state_0 \in \partitionSpace$.}
These results confirm the idea from \cref{sec:Algorithm} that partition refinement can lead to controllers with better performance guarantees.
A more fine-grained partition leads to more actions enabled in the abstraction, which in turn improves the robust lower bound on the reach-avoid probability.

\paragraph{Scalability.}
We report run times and model sizes 
\change{
\ifappendix
    in \cref{tab:temperature_control} in \cref{sec:details_experiments}.
\else
    in~\citet[Appendix~C,~Table~1]{Badings2022_AAAI_extended}.
\fi }%
The run times vary between $\SI{5.6}{\second}$ and $\SI{8}{\minute}$ for the smallest ($15 \times 25$) and largest ($70 \times 100$) partitions, respectively.
Without the decoupling procedure, even a 2-room version on the smallest partition \emph{leads to a memory error}.
By contrast, our approach with decoupling procedure has \emph{linear complexity in the number of rooms}.
We observe that accounting for epistemic uncertainty yields \glspl{iMDP} with more transitions and slightly higher run times (for the largest partition: $82$ instead of $52$ million transitions and $8$ instead of $\SI{5}{\minute}$).
This is due to larger successor state sets $\boxSample_{i\ell}^{(\iota)}$ in \cref{eq:NextStateRepresentation} caused by the epistemic error $\Delta_i$.
\section{Conclusions and Future Work}
\label{sec:Conclusion}

We have presented a novel abstraction-based controller synthesis method for dynamical models with aleatoric and epistemic uncertainty.
The method captures those different types of uncertainties in order to ensure certifiably safe controllers.
Our experiments show that we can synthesize controllers that are robust against uncertainty and, in particular, against deviations in the model parameters.

\change{
\paragraph{Generality of our approach.}
We stress that the models in \cref{eq:uncertain_LTI_model,eq:uncertain_LTI_model_disturbance} can capture many common sources of uncertainty.
As we have shown, our approach simultaneously deals with epistemic uncertainty over one or multiple parameters, as well as aleatoric uncertainty due to stochastic noise of an unknown distribution.
Moreover, the additive parameter $q_k$ introduced in \cref{sec:Algorithm} enables us to generate abstractions that faithfully capture any error term represented by a bounded set (as we have done for the temperature control problem).
%
For example, we plan to apply our method to \emph{nonlinear systems}, such as non-holonomic robots~\cite{DBLP:books/daglib/Thrun2005}.
Concretely, we may apply our abstraction method on a linearized version of the system while treating linearization errors as nondeterministic disturbances $\disturbance_k \in \mathcal{Q}$ in \cref{eq:uncertain_LTI_model_disturbance}.
The main challenge is then to obtain this set-bounded representation $\mathcal{Q}$ of the linearization error.

\paragraph{Scalability.}
Enforcing robustness against epistemic uncertainty hampers the scalability of our approach, especially compared to similar non-robust abstraction methods, such as~\citealt{Badings2022AAAI}.
To reduce the computational complexity, we restrict partitions to be rectangular, but the theory is valid for any convex partition.
A finer partition yields larger \glspl{iMDP} but also improves the guarantees on controllers.

\paragraph{Safe learning.}
Finally, we wish to integrate the abstractions in a \emph{safe learning framework}~\cite{Brunke2021safelearning} by, as discussed in the related work in \cref{sec:Introduction}, applying our approach to guarantee safe interactions with the system.
}

\section*{Acknowledgments}
This research has been partially funded by NWO grant NWA.1160.18.238 (PrimaVera) and by EPSRC IAA Award EP/X525777/1.

\bibliography{references}
\clearpage

\ifappendix
    \appendix
    \section{Proofs}
\label{sec:proofs}

\subsection*{Computing Backward Reachable Sets}
The following lemma shows that computing backward reachable sets, as required for \cref{def:Actions} is computationally tractable.
Because $\targetSet_\ell$ and $\cControlSpace$ are convex polytopes, the inverse of the dynamics of the nominal model in \cref{eq:nominal_model} is also a convex polytope, which is exactly characterized by the vertices of $\cControlSpace$ and $\targetSet_\ell$.
We formalize this claim in \cref{lemma:BackReachSet_Representation}.

\begin{lemma}[Representation of the backward reachable set] 
    \label{lemma:BackReachSet_Representation}
    Let the control space $\cControlSpace = \mathrm{conv} (u^1, \ldots, u^q), \, q \in \N$, and target set $\targetSet_\ell = \mathrm{conv}(t^1,\ldots,t^d), \, d \in \N$, of action $a_\ell$ be given in their vertex representations.
    Under \cref{assump:nominal_nonsingular}, we have that
    \begin{equation}
        \BackReach(\targetSet_\ell) = \mathrm{conv}(\bar{x}_{ij} \colon i=1, \ldots, d,~j = 1, \ldots, q),
        \label{eq:BackReachSet_representation}
    \end{equation}
    where $\bar{x}_{ij}$ is the unique solution of the linear system 
    \begin{equation}
        A(\hat \alpha) \bar{x}_{ij} + B(\hat \alpha) u^j = t^i.
        \label{eq:BackReachSet_lin_sys}
    \end{equation}
\end{lemma}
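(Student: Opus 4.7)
The plan is to establish set equality via two inclusions, relying crucially on the nonsingularity of $A(\hat\alpha)$ (Assumption 2) to invert the nominal dynamics, and on the bilinearity hidden in the map $(x,u) \mapsto A(\hat\alpha)x + B(\hat\alpha)u$ to trade convex combinations of vertices for convex combinations of preimages.

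First I would unpack the definition: $x \in \BackReach(\targetSet_\ell)$ iff there exist $u \in \cControlSpace$ and $t \in \targetSet_\ell$ with $A(\hat\alpha)x + B(\hat\alpha)u = t$. Under Assumption 2, this equation has the unique solution $x = A(\hat\alpha)^{-1}(t - B(\hat\alpha)u)$; in particular, the system in \cref{eq:BackReachSet_lin_sys} is uniquely solvable for each pair of vertices, so the points $\bar{x}_{ij}$ are well defined.

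For the inclusion $\mathrm{conv}(\bar{x}_{ij}) \subseteq \BackReach(\targetSet_\ell)$, I would take $x = \sum_{i,j} \gamma_{ij} \bar{x}_{ij}$ with $\gamma_{ij} \geq 0$ and $\sum_{i,j}\gamma_{ij} = 1$. Applying $A(\hat\alpha)$ and using \cref{eq:BackReachSet_lin_sys} gives $A(\hat\alpha) x = \sum_{i,j} \gamma_{ij}(t^i - B(\hat\alpha) u^j) = \sum_i \big(\sum_j \gamma_{ij}\big) t^i - B(\hat\alpha)\sum_j \big(\sum_i \gamma_{ij}\big) u^j$. The marginals $\mu_i = \sum_j \gamma_{ij}$ and $\lambda_j = \sum_i \gamma_{ij}$ are nonnegative and sum to one, so the expression is of the form $t - B(\hat\alpha)u$ with $t \in \targetSet_\ell$ and $u \in \cControlSpace$, witnessing $x \in \BackReach(\targetSet_\ell)$.

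For the reverse inclusion, given $x \in \BackReach(\targetSet_\ell)$, write $u = \sum_j \lambda_j u^j$ and $t = \sum_i \mu_i t^i$ as convex combinations of vertices. Using the identity $1 = (\sum_j \lambda_j)(\sum_i \mu_i) = \sum_{i,j}\mu_i\lambda_j$, I rewrite $t - B(\hat\alpha)u = \sum_{i,j}\mu_i\lambda_j(t^i - B(\hat\alpha)u^j)$, and then apply $A(\hat\alpha)^{-1}$ to both sides and invoke \cref{eq:BackReachSet_lin_sys} to get $x = \sum_{i,j}\mu_i\lambda_j \bar{x}_{ij}$. Since $\mu_i\lambda_j \geq 0$ and the coefficients sum to one, $x \in \mathrm{conv}(\bar{x}_{ij})$. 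The main subtlety is purely algebraic, namely the tensorization step that converts the independent convex combinations for $u$ and $t$ into a joint convex combination indexed by $(i,j)$; nonsingularity of $A(\hat\alpha)$ is what makes the preimage single-valued so that this joint combination lands exactly on the $\bar{x}_{ij}$ rather than on some larger set.
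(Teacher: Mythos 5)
Your proof is correct and follows essentially the same route as the paper: both directions hinge on applying the nominal dynamics (resp.\ its inverse, using \cref{assump:nominal_nonsingular}) and converting between a joint convex combination over pairs $(i,j)$ and the pair of marginal convex combinations for $t \in \targetSet_\ell$ and $u \in \cControlSpace$. Your tensorization with product coefficients $\mu_i\lambda_j$ is exactly the paper's $\bar\gamma_{ik} = \xi_k\gamma_i$, presented slightly more directly.
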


\begin{proof}
We first proof that \cref{eq:BackReachSet_representation} holds with inclusion, i.e.,
\begin{equation}
    \mathrm{conv}(\bar{x}_{ij} \colon i=1, \ldots, d,~j = 1, \ldots, q) \subseteq \BackReach(\targetSet_\ell).
    \label{eq:proof_lemma_BackReachSet_incl}
\end{equation}
Let $z$ be any element belonging to the right-hand side of \cref{eq:BackReachSet_representation}, i.e. $z \in \mathrm{conv}(\bar{x}_{ij} \colon i=1, \ldots, d,~j = 1, \ldots, q)$.
Then, there exists $\gamma_{ij}$, $i  = 1, \ldots, d, j = 1, \ldots, q$, such that
\begin{equation*}
    \gamma_{ij} \geq 0, \quad \sum_{i,j = 1}^{d,q} \gamma_{ij} = 1, \quad
    z = \sum_{i,j =1}^{d,q} \gamma_{ij} \bar{x}_{ij}.
\end{equation*}
For each vertex $u^j$ of $\cControlSpace$, $j = 1\ldots, q$, let $\xi_j = \sum_{i = 1}^d \gamma_{ij}$ and write the control input $u$ corresponding to point $z$:
\begin{equation*}
    u = \sum_{j = 1}^q \xi_j u^j, \quad u \in\cControlSpace,
    \label{eq:proof_lemma_BackReachSet}
\end{equation*}
which is admissible by construction. 
Now, note that the mapping of the pair $(z,u)$ under the dynamics satisfies
\begin{align}
    A(\hat \alpha) z + B(\hat \alpha) u &= \sum_{i,j = 1}^{d,q} \gamma_{ij} A(\hat \alpha) \bar{x}_{ij} + \sum_{j = 1}^q \xi_j B(\hat \alpha) u^j \nonumber 
    \\ 
    &= \sum_{i,j = 1}^{d,q} \gamma_{ij} \left( A(\hat \alpha) \bar{x}_{ij} +  B(\hat \alpha) u^j \right)  \nonumber 
    \\
    &= \sum_{i = 1}^{d} \bar\gamma_{i} t^i \in \targetSet_\ell,  \nonumber
\end{align}
where the first equality follows from the definition of $z$ and $u$, the second by the definition of $\xi_j = \sum_{i = 1}^d \gamma_{ij}$, and the third by letting $\bar\gamma_i = \sum_{j =1}^q \gamma_{ij}$ and noting that $\sum_{i = 1}^d \bar \gamma_i = 1$ and $\bar \gamma_i \geq 0$ for all $i = 1, \ldots, d$. 
In other words, the mapping of the pair $(z,u)$ belongs to the target set $\targetSet_\ell$, which implies that \cref{eq:proof_lemma_BackReachSet_incl} holds by construction.
This concludes the first part of the lemma.

To show the opposite direction in \cref{eq:BackReachSet_representation}, let $z$ be any element in $\BackReach(\targetSet_\ell)$. 
By definition of the backward reachable set (see \cref{sec:Abstractions}), this means that there exist $\xi_j \geq 0$ and $\gamma_i \geq 0$, $i = 1, \ldots, d, j = 1, \ldots, q$, with $\sum_{j = 1}^q \xi_j = 1$ and $\sum_{i = 1}^d \gamma_i = 1$, such that 
\begin{equation}
A(\hat \alpha) z + B(\hat \alpha) \left( \sum_{j = 1}^q \xi_j u^j \right) = \sum_{i = 1}^d \gamma_i t^i.
\label{eq:second-inclusion-proof}
\end{equation}
In other words, if $z \in \BackReach(\targetSet_\ell)$ then there exists an input $u \in \cControlSpace$ such that $A(\hat \alpha) z + B(\hat \alpha) u \in \targetSet_\ell$.
Substituting \eqref{eq:BackReachSet_lin_sys} into \eqref{eq:second-inclusion-proof} we obtain
\begin{equation}
\begin{split}
    & A(\hat \alpha) z + B(\hat \alpha) \sum_{j = 1}^q \xi_j u^j = \sum_{i = 1}^d
    \gamma_i \left( A(\hat \alpha) \bar{x}_{ik} + B(\hat \alpha) u^k \right)
    \\
    & A(\hat \alpha) z = \sum_{i = 1}^d
    \gamma_i \left( A(\hat \alpha) \bar{x}_{ik} + B(\hat \alpha) u^k \right) - B(\hat \alpha) \sum_{j = 1}^q \xi_j u^j,
\label{eq:second-inclusion-proof-2}
\end{split}
\end{equation}
for all $k = 1, \ldots, q$. Multiplying both sides of \eqref{eq:second-inclusion-proof-2} by $A(\hat \alpha)^{-1}$, which is allowed due to Assumption \ref{assump:nominal_nonsingular}, yields
\begin{equation}
\begin{split}
    z = \sum_{i = 1}^d
    \gamma_i \big( \bar{x}_{ik} & + A(\hat \alpha)^{-1} B(\hat \alpha) u^k \big) 
    \\
    & - A(\hat \alpha)^{-1} B(\hat \alpha) \sum_{j = 1}^q \xi_j u^j.
    \label{eq:second-inclusion-proof-2b}
\end{split}
\end{equation}
Since \cref{eq:second-inclusion-proof-2b} holds for all $k=1,\ldots,q$, we can multiply both sides by $\xi_k$ for each $k = 1, \ldots, q$, and sum up the resulting expression. Due to the fact that $\sum_{k = 1}^q \xi_k = 1$, we obtain
\begin{equation}
\begin{split}
    z = \sum_{i,k = 1}^{d,q} \bar{\gamma}_{ik} \bar{x}_{ik} &+ A(\hat \alpha)^{-1} B(\hat \alpha) \sum_{k = 1}^q \xi_k u^k 
    \\ 
    &- A(\hat \alpha)^{-1} B(\hat \alpha) \sum_{j = 1}^q \xi_j u^j,
    \label{eq:second-inclusion-proof-3}
\end{split}
\end{equation}
where $\bar{\gamma}_{ik} = \xi_k \gamma_i$, which is larger than or equal to zero for all $i = 1, \ldots, d, k = 1, \ldots, q$.
Since the last two terms on the right-hand side of \eqref{eq:second-inclusion-proof-3} cancel out and $\sum_{i,k  = 1}^{d,q} \bar{\gamma}_{ik} = 1$, we conclude that $z \in \mathrm{conv}(\bar{x}_{ij}: i = 1, \ldots, d, j = 1,\ldots, q)$, thus proving the opposite inclusion and concluding the proof of the lemma.
\end{proof}

\subsection*{Bounding the Epistemic Error}
The following lemma shows that the set $\Delta_i$, defined in \cref{eq:epistemic_error_set} is a subset of a convex polytope, which is characterized by the region $\region_i$, the feasible control space $\cControlSpace$, and the model dynamics.
Importantly, note that the probability interval in \cref{eq:probability_interval} also holds for any overapproximation of $\Delta_i$ obtained using \cref{lemma:EpistemicError_characterization}.

\begin{lemma}
\label{lemma:EpistemicError_characterization}
    Given the vertex representations of sets $\region_i = \mathrm{conv}(v^1, \ldots, v^p)$ and $\cControlSpace = \mathrm{conv}(u^1,\ldots,u^q)$ for $p,q \in \N$, define $\Delta_i$ as in \cref{eq:epistemic_error_set}. 
    Then, we have that
    \begin{equation}
    \begin{split}
        \Delta_i 
        \subseteq \mathrm{conv}\Big(& \big(A_\iota - A(\hat\alpha) \big) v^j + \big(B_\iota - B(\hat\alpha) \big) u^{\ell}
        \\
        & \colon \iota,j,\ell= 1, \ldots, \{r,p,q\} \Big),  
        \label{eq:EpistemicError_characterization}
    \end{split}
    \end{equation}
    where $A_1,\ldots,A_r$ and $B_1,\ldots,B_r$ are defined in \cref{eq:dynamics_matrix_vertices}.
\end{lemma}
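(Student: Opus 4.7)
The plan is to show that every element $\delta(\alpha, x_k, u_k) \in \Delta_i$ can be written as a convex combination of the finite set of points on the right-hand side of \cref{eq:EpistemicError_characterization}, and hence lies in their convex hull. The proof is essentially a triple-index bookkeeping argument that exploits the convex-combination structures of $\alpha$, $x_k$, and $u_k$ simultaneously.

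First, I would rewrite the epistemic error $\delta(\alpha, x_k, u_k)$ in a form that factors out the parameter weights. Using $A(\hat\alpha) = \sum_\iota \hat\alpha_\iota A_\iota$ together with $\sum_\iota \alpha_\iota = 1$, one gets $A(\alpha) - A(\hat\alpha) = \sum_\iota \alpha_\iota (A_\iota - A(\hat\alpha))$, and similarly for $B$. Substituting into \cref{eq:EpistemicUncertainty_def} yields
\begin{equation*}
    \delta(\alpha, x_k, u_k) = \sum\nolimits_{\iota = 1}^r \alpha_\iota \bigl[ (A_\iota - A(\hat\alpha)) x_k + (B_\iota - B(\hat\alpha)) u_k \bigr].
\end{equation*}

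Next, I would express $x_k \in \region_i$ and $u_k \in \cControlSpace$ in their vertex representations as $x_k = \sum_j \beta_j v^j$ and $u_k = \sum_\ell \gamma_\ell u^\ell$ with nonnegative weights summing to one. Substituting both expansions into the formula above gives a triple sum. The key step is to introduce the coefficients $\lambda_{\iota,j,\ell} = \alpha_\iota \beta_j \gamma_\ell$, which are nonnegative and satisfy $\sum_{\iota,j,\ell} \lambda_{\iota,j,\ell} = 1$. A direct calculation, using $\sum_\ell \gamma_\ell = 1$ on the $A$-part and $\sum_j \beta_j = 1$ on the $B$-part, shows that
\begin{equation*}
    \sum\nolimits_{\iota,j,\ell} \lambda_{\iota,j,\ell} \bigl[ (A_\iota - A(\hat\alpha)) v^j + (B_\iota - B(\hat\alpha)) u^\ell \bigr] = \delta(\alpha, x_k, u_k),
\end{equation*}
which exhibits $\delta$ as a convex combination of the $r \cdot p \cdot q$ listed vertices.

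There is no real obstacle in this proof: the trick is simply to choose a joint product weighting over the three indices rather than treating $\alpha$, $\beta$, and $\gamma$ separately, so that the cross terms collapse correctly via the marginal sum-to-one identities. The only subtle point worth flagging explicitly is why the $(A_\iota - A(\hat\alpha)) x_k$ terms can be relaxed to convex combinations over the vertices $v^j$ even though they are multiplied by $u_k$-independent quantities; this is precisely where summing out $\gamma_\ell$ makes the argument go through. Since the reasoning relies only on the linearity of the dynamics and the convexity of $\Gamma$, $\region_i$, and $\cControlSpace$, the conclusion $\Delta_i \subseteq \mathrm{conv}(\cdot)$ follows immediately, and the inclusion is in general strict because not every convex combination of the vertices need correspond to a realizable triple $(\alpha, x_k, u_k)$.
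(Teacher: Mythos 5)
Your proof is correct. It rests on the same two ingredients as the paper's proof---the multilinearity of $\delta$ in $(\alpha, \state_k, \control_k)$ and the vertex representations of $\Gamma$, $\region_i$, and $\cControlSpace$---but the execution is more direct: you exhibit every $\delta(\alpha,\state_k,\control_k)$ explicitly as a convex combination with product weights $\lambda_{\iota,j,\ell}=\alpha_\iota\beta_j\gamma_\ell$, whereas the paper argues in stages (first fixing $\alpha$ and characterizing $\Delta_i(\alpha)$ as the convex hull generated by the vertices of $\region_i$ and $\cControlSpace$, then taking the union over $\alpha\in\Gamma$, and finally reducing the $\alpha$-dependence to the simplex vertices $e_1,\ldots,e_r$ via $A(e_\iota)=A_\iota$ and $B(e_\iota)=B_\iota$). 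The identity $A(\alpha)-A(\hat\alpha)=\sum_{\iota}\alpha_\iota\big(A_\iota-A(\hat\alpha)\big)$, which you use up front, is exactly what the paper's final reduction step relies on, so the two arguments are mathematically equivalent; yours is arguably cleaner because marginalizing out $\sum_\ell\gamma_\ell=1$ (resp.\ $\sum_j\beta_j=1$) handles in one line the cross-term bookkeeping that the paper spreads over several displays. Your closing remark that the inclusion in \cref{eq:EpistemicError_characterization} is generally strict also agrees with the paper, which supports the same point with an explicit numerical counterexample.
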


\begin{proof}

First, let us fix any $\alpha \in \Gamma$, and observe that the set $\Delta_i$ defined in \cref{eq:epistemic_error_set} evaluated at $\alpha$ is written as
\begin{equation}
\begin{split}
    \Delta_i (\alpha)
    &= \left\{ \delta(\alpha,\state_k,\control_k) \colon
    \state_k \in \region_i,
    \control_k \in \cControlSpace \right\}
    \\
    &= \Big\{ (A(\alpha) - A(\hat{\alpha}))\state_k + (B(\alpha) - B(\hat{\alpha}))\control_k
    \\ &\quad\quad
    \colon
    \state_k \in \region_i,
    \control_k \in \cControlSpace \Big\}.
    \label{eq:epistemic_error_fixed_alpha}
\end{split}
\end{equation}
We observe that the sets $\{ (A(\alpha) - A(\hat{\alpha}))\state_k \colon x_k \in \region_i \}$ and $\{ (B(\alpha) - B(\hat{\alpha}))\control_k \colon u_k \in \cControlSpace \}$ are both convex polytopes characterized by the vertices of $\region_i$ and $\cControlSpace$, respectively.
Thus, we rewrite \cref{eq:epistemic_error_fixed_alpha} as
\begin{equation*}
\begin{split}
    \Delta_i (\alpha)
    = \mathrm{conv}\Big( & (A(\alpha) - A(\hat{\alpha})) v^j + (B(\alpha) - B(\hat{\alpha})) u^{\ell}
    \\
    & \enskip\colon j,\ell= 1, \ldots, \{p,q\} \Big).
    \label{eq:epistemic_error_fixed_alpha2}
\end{split}
\end{equation*}
Note that the full set $\Delta_i$ is the union of $\Delta_i (\alpha)$ over all $\alpha \in \Gamma$:
\begin{equation}
\begin{split}
    \Delta_i &= \bigcup_{\alpha \in \Gamma} \Delta_i (\alpha)
    \\
    &\subseteq \mathrm{conv}\Big( (A(\alpha) - A(\hat{\alpha})) v^j + (B(\alpha) - B(\hat{\alpha})) u^{\ell}
    \\
    & \qquad\qquad\enskip \colon j,\ell= 1, \ldots, \{p,q\}, \, \alpha \in \Gamma \Big).
    \label{eq:epistemic_error_fixed_alpha3}
\end{split}
\end{equation}
Crucially, observe that for any fixed pair of vertices $\bar{v} \coloneqq v^j$ and $\bar{u} \coloneqq u^\ell$, $j,\ell = 1,\ldots,\{p,q\}$, we can write the convex hull in \cref{eq:epistemic_error_fixed_alpha3} in terms of only the matrices $A_\iota, B_\iota$ for $\iota=1,\ldots,r$ of which $A(\alpha)$ and $B(\alpha$ are a convex combination (as defined in \cref{eq:dynamics_matrix_vertices}):
\begin{align}
    & \mathrm{conv}\big( (A(\alpha) - A(\hat{\alpha})) \bar{v} + (B(\alpha) - B(\hat{\alpha})) \bar{u} \colon \alpha\in\Gamma \big)
    \nonumber
    \\
    & \enskip= \mathrm{conv}\Big( (A(\alpha) - A(\hat{\alpha})) \bar{v} + (B(\alpha) - B(\hat{\alpha})) \bar{u}
    \nonumber
    \\
    & \qquad\qquad\quad \colon \alpha = e_1, \ldots, e_r \Big)
    \label{eq:epistemic_error_fixed_others}
    \\
    & \enskip= \mathrm{conv}\big( (A_\iota - A(\hat{\alpha})) \bar{v} + (B_\iota - B(\hat{\alpha})) \bar{u} \colon \iota = 1,\ldots,r \big),
    \nonumber
\end{align}
where $e_\iota \in \R^r$ is the vector with all components equal to $0$, except the $\iota^\text{th}$, which is $1$.
The last equality in \cref{eq:epistemic_error_fixed_others} holds, since $A(e_\iota) = A_\iota$ and $B(e_\iota) = B_\iota$ for any $\iota=1\ldots,r$.
In other words, considering the values $\alpha \in \Gamma \setminus \{ e_1,\ldots,e_r \}$ in \cref{eq:epistemic_error_fixed_alpha3} is redundant since these values can be expressed as a convex combination of $\alpha \in \{e_1,\ldots,e_r\}$, as in \cref{eq:epistemic_error_fixed_others}.
As a result, we simplify \cref{eq:epistemic_error_fixed_alpha3} as
\begin{equation*}
    \begin{split}
        \Delta_i
        \subseteq \mathrm{conv}\Big(& \big(A_\iota - A(\hat\alpha) \big) v^j + \big(B_\iota - B(\hat\alpha) \big) u^{\ell}
        \\
        & \colon \iota,j,\ell= 1, \ldots, \{r,p,q\} \Big),  
    \end{split}
    \end{equation*}
which equals \cref{eq:EpistemicError_characterization}.
This concludes the proof of \cref{lemma:EpistemicError_characterization}.
\end{proof}

\paragraph{\cref{eq:EpistemicError_characterization} does not hold with equality.}
It may be tempting to conclude that \cref{eq:EpistemicError_characterization} holds with equality.
However, we show with a simple example that this is not the case.
Specifically, we apply \cref{lemma:EpistemicError_characterization} to a model with the matrices
\begin{equation*}
\begin{split}
    A_1 &= \begin{bmatrix}
        0.9 & 1 \\ 0 & 0.9
    \end{bmatrix},
    \,
    A_2 = \begin{bmatrix}
        1.1 & 1 \\ 0 & 1.1
    \end{bmatrix},
    \,
    A(\hat\alpha) = \begin{bmatrix}
        1 & 1 \\ 0 & 1
    \end{bmatrix},
    \\
    B_1 &= B_2 = B(\hat\alpha) = \begin{bmatrix}
        1 \\ 1
    \end{bmatrix},
    \,\,
    \region_i = [0, 1]^2,
    \,\,
    \cControlSpace = [-5, 5].
\end{split}
\end{equation*}
The resulting right-hand side of \cref{eq:EpistemicError_characterization} is shown by the dashed hull in \cref{fig:contradiction_epistError}.
Moreover, to approximate $\Delta_i$, we compute $\delta(\alpha,\state_k,\control_k)$ for many linearly spaced points $\alpha \in \Gamma$, $\state_k \in \region_i$, and $\control_k \in \cControlSpace$, which are shown by the blue dots in \cref{eq:EpistemicError_characterization}.
While the convex hull is a sound overapproximation of the set $\Delta_i$, \emph{the opposite is clearly not the case} (there are points in the convex hull that are not included in $\Delta_i$).
This result empirically shows that \cref{eq:EpistemicError_characterization} of \cref{lemma:EpistemicError_characterization} does not hold with equality.

\begin{figure}[t!]
    \centering
    \includegraphics[width = .75\columnwidth]{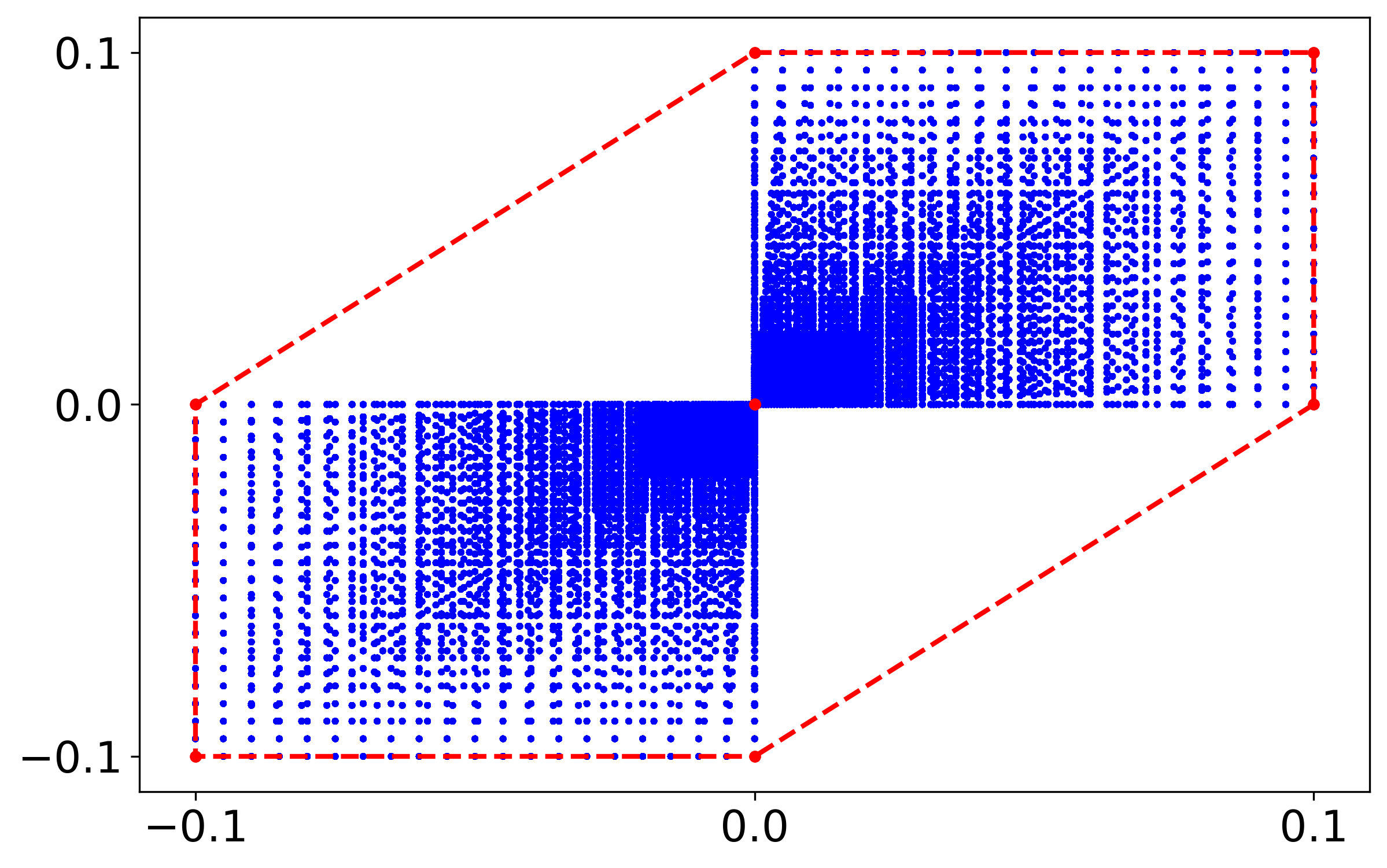}
    \caption{Overapproximation of set $\Delta_i$ using \cref{lemma:EpistemicError_characterization}, shown as the red dashed convex hull, and many points in the set $\Delta_i$ for different $\state_k \in \region_i$ and $\control_k \in \cControlSpace$, shown in blue.}
    \label{fig:contradiction_epistError}
\end{figure}

\subsection*{Proof of \cref{lemma:prob_lower_bound} (Lower Bound Probability)}
Our goal is to show that if we remove $\discardedSet$ samples as defined in \cref{eq:set_R_lower_bound}, the optimal solution to the scenario optimization problem $\mathfrak{L}_{\discardedSet}$ satisfies the claim in \cref{lemma:prob_lower_bound}.
To this end, we use the key result from~\citet[Theorem~5]{DBLP:conf/cdc/RomaoMP20}, which requires three key assumptions on the scenario problem $\mathfrak{L}_{\discardedSet}$:
\begin{enumerate}
    \item Problem $\mathfrak{L}_{\discardedSet}$ belongs to the class of so-called \emph{fully-supported problems} (see~\citet{DBLP:journals/siamjo/CampiG08} for a definition);
    \item The solution to problem $\mathfrak{L}_{\discardedSet}$ is \emph{unique};
    \item All samples \emph{not contained} in $\discardedSet$ violate the optimal solution with probability one.
\end{enumerate}
Requirement (1) is satisfied because problem $\mathfrak{L}_{\discardedSet}$ has a scalar decision variable, (2) is satisfied due to \cref{assump:iid_process}, and (3) is satisfied by definition of $\discardedSet$, since any sample not contained in $\discardedSet$ is violated by the solution to $\mathfrak{L}_\discardedSet$ with probability one.
We invoke the key result by~\citet[Theorem~5]{DBLP:conf/cdc/RomaoMP20} that, under these requirements, the optimal solution $\lambda^\star_\discardedSet$ satisfies the following: 
\begin{equation}
\begin{split}
    \amsmathbb{P}^N \Big\{ & \amsmathbb{P} 
        \{
        \noise_k \in \Omega \colon
        \boxSample_{i \ell} \not\subseteq \region_j(\lambda^\star_\discardedSet) \} \leq \varepsilon \Big\}
    \\
    & \qquad = 1 - \sum_{i=0}^{N - \vert \discardedSet \vert} \binom Ni \varepsilon^i (1-\varepsilon)^{N-i},
    \label{eq:scenario_proof_1}
\end{split}
\end{equation}
where $\varepsilon$ is an upper bound on the so-called \emph{violation probability}, which is the probability that a sample $\boxSample_{i\ell}$ under a random noise value $\noise_k \in \Omega$ is not fully contained in the feasible set $\region_j(\lambda^\star_\discardedSet)$ to problem $\mathfrak{L}_\discardedSet$.
Intuitively, \cref{eq:scenario_proof_1} states that the violation probability is bounded by $\varepsilon$, and this result holds with the confidence at the right-hand side.
Note that we can rewrite this violation probability as the \emph{satisfaction probability} as:
\begin{equation*}
    \amsmathbb{P}\{ \boxSample_{i \ell} \not\subseteq \region_j(\lambda^\star_\discardedSet) \}
    =
    1 - \amsmathbb{P}\{ \boxSample_{i \ell} \subseteq \region_j(\lambda^\star_\discardedSet) \}.
\end{equation*}
Moreover, by defining $\munderbar{p} = 1-\varepsilon$, we rewrite \cref{eq:scenario_proof_1} as
\begin{equation}
\begin{split}
    \amsmathbb{P}^N \Big\{ & 1 - \amsmathbb{P} 
        \{
        \noise_k \in \Omega \colon
        \boxSample_{i \ell} \subseteq \region_j(\lambda^\star_\discardedSet) \} \leq 1-\munderbar{p} \Big\}
    \\
    & = \amsmathbb{P}^N \Big\{ \amsmathbb{P} 
        \{
        \noise_k \in \Omega \colon
        \boxSample_{i \ell} \subseteq \region_j(\lambda^\star_\discardedSet) \} \geq \munderbar{p} \Big\}
    \\
    & = 1 - \sum_{i=0}^{N - \vert \discardedSet \vert} \binom Ni (1-\munderbar{p})^i \munderbar{p}^{N-i} = 1 - \frac{\beta}{N}.
    \label{eq:scenario_proof_2}
\end{split}
\end{equation}
We will motivate our choice of $1 - \frac{\beta}{N}$ in \cref{eq:scenario_proof_2} below.
To proof \cref{lemma:prob_lower_bound}, we need to show that for some $\discardedSet$, it holds that $\region_j(\lambda^\star_\discardedSet) \subseteq \region_j$.
However, we do not know \emph{a priori} (i.e., before observing the samples) for which set $\discardedSet$ this claim holds.
Specifically, there are $N$ possible sets $\discardedSet$ that we may consider, ranging from sizes $\vert \discardedSet \vert \in \{1,\ldots,N\}$.\footnote{Note that the case $\vert \discardedSet \vert = 0$ (i.e., no samples are contained in $\region_j$ at all) is treated as a special case in \cref{lemma:prob_lower_bound}.}
Let us, for each value of $\vert \discardedSet \vert$, denote by $\mathcal{A}_{\vert \discardedSet \vert}$ the event that
\begin{equation*}
    \amsmathbb{P} 
    \{
    \noise_k \in \Omega \colon
    \boxSample_{i \ell} \subseteq \region_j(\lambda^\star_\discardedSet) \} \geq \munderbar{p}.
\end{equation*}
From \cref{eq:scenario_proof_2}, we have that $\amsmathbb{P}\{ \mathcal{A}_{\vert \discardedSet \vert} \} = 1-\frac{\beta}{N}$, while for its complement $\mathcal{A}'_{\vert \discardedSet \vert}$ we obtain $\amsmathbb{P}\{ \mathcal{A}'_{\vert \discardedSet \vert} \} = \frac{\beta}{N}$.
Via Boole's inequality (the union bound), we know that
\begin{equation}
    \amsmathbb{P}^N \Big\{ \bigcap_{\xi = 1}^{N} \mathcal{A}_\xi \Big\} 
    = 
    1 - \amsmathbb{P}^N \Big\{ \bigcup_{\xi = 1}^{N} \mathcal{A}'_\xi \Big\}
    \geq 1 - \beta.
    \label{eq:UnionBound2}
\end{equation}
After observing the samples at hand, we determine $\discardedSet$ as per \cref{eq:set_R_lower_bound}, giving an expression in the form of \cref{eq:scenario_proof_2} for that set $\discardedSet$.
The probability that this expression holds cannot be smaller than that of the intersection of all events in \cref{eq:UnionBound2}. 
Thus, we obtain that
\begin{equation}
    \amsmathbb{P}^N \Big\{
        \amsmathbb{P} \{ 
                \noise_k \in \Omega \colon \boxSample_{i\ell} \subseteq \region_j
            \} \geq \munderbar{p}
    \Big\} \geq 1 - \beta,
    \label{eq:Proof6_lowerBound}
\end{equation}
where $\munderbar{p} = 0$ if $\discardedSet = \emptyset$, i.e., $\vert \discardedSet \vert = 0$ (which trivially holds with probability one), and otherwise, $\munderbar{p}$ is the solution to \cref{eq:scenario_proof_2} for that cardinality $\vert \discardedSet \vert$:
\begin{equation}
    \frac{\beta}{N} = \sum_{i=0}^{N - \vert \discardedSet \vert} \binom Ni (1-\munderbar{p})^i \munderbar{p}^{N-i},
\end{equation}
which is equivalent to \cref{eq:prob_lower_bound,eq:prob_lower_bound2}. 
Thus, we conclude the proof.

\begin{figure}[t!]
	\centering
	\includegraphics[width = .75\columnwidth]{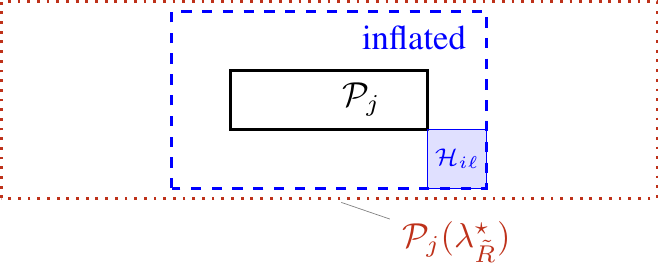}
	\caption{
	    Visualization of \cref{eq:inflated_region}, where \emph{inflated} represented the region $\region_j$ inflated by the size of the samples $\boxSample_{i\ell}$ (overapproximated by a box).
	}
	\label{fig:inflated_region}
\end{figure}

\subsection*{Proof of \cref{lemma:prob_upper_bound} (Upper Bound Probability)}
Before proving \cref{lemma:prob_upper_bound}, we discuss the claim from \cref{sec:PAC_intervals} that using scenario optimization may lead to conservative estimates of the upper bound transition probabilities.
To use scenario optimization for computing an upper bound probability, we must (analogous to computing a lower bound) find a subset of samples $\tilde{\discardedSet} \subset \{1,\ldots,N\}$ such that the solution to the scenario problem in \cref{eq:scenario_problem} yields a region $\region_j(\lambda^\star_{\tilde\discardedSet})$ that is (roughly speaking) \emph{at least as big as the region $\region_j$, inflated by the width of the sample sets $\boxSample_{i\ell}$}.
The intuition is that, to find a valid upper bound transition probability, we must find a region $\region_j(\lambda^\star_{\tilde\discardedSet})$ such that if $\boxSample_{i\ell} \not\subseteq \region_j(\lambda^\star_{\tilde\discardedSet})$, we always have $\boxSample_{i\ell} \cup \region_j = \emptyset$.
Formally, this condition is written as follows:
\begin{equation}
    \region_j(\lambda^\star_{\tilde\discardedSet}) \supseteq \{ x \in \R^n \colon \lVert x - y \rVert_2 \leq w, \, y \in \region_j \},
    \label{eq:inflated_region}
\end{equation}
where $w > 0$ is the (maximum) width of the set $\boxSample_{i\ell}$, as also shown in \cref{fig:inflated_region}.
In other words, the set $\region_j(\lambda^\star_{\tilde\discardedSet})$ for the solution $\lambda^\star_{\tilde{\discardedSet}}$ must be a \emph{superset} of the region $\region_j$, \emph{inflated} by (at least) the size of the set $\boxSample_{i\ell}$.

Depending on the shape of the region $\region_j$, it can be difficult to satisfy \cref{eq:inflated_region}.
In particular, if the region $\region_j$ has a narrow shape (as is the case in \cref{fig:inflated_region}), we must include many more samples in $\tilde{\discardedSet}$ than needed to find the lower bound using \cref{lemma:prob_lower_bound}.
As a result, using scenario optimization to compute upper bound probabilities would typically lead to very conservative results.

\paragraph{Proof of \cref{lemma:prob_upper_bound}.}
Assume we are given $N$ samples of a Bernoulli random variable with unknown probability $p$, and with sample sum denoted by $\tilde{\discardedSet}$.
For this Bernoulli random variable, Hoeffding's inequality~\cite{Boucheron2013concentrationInequalities} is traditionally stated in the following form:
\begin{equation}
    \amsmathbb{P}^N \Big\{ 
        pN \leq \tilde{\discardedSet} + \varepsilon N
        \Big\} \geq 1 - e^{-2\varepsilon^2 N},
        \label{eq:Hoeffding_proof1}
\end{equation}
for some $\varepsilon > 0$.
Thus, the expected value $pN$ over $N$ samples is upper bounded by the sample sum $\tilde{\discardedSet}$ plus the value of $\varepsilon$.
We are interested in the unknown probability $p$, instead of the sum over $N$ samples, so we rewrite \cref{eq:Hoeffding_proof1} as
\begin{equation}
    \amsmathbb{P}^N \Big\{ 
        p \leq \frac{\tilde{\discardedSet}}{N} + \varepsilon
        \Big\} \geq 1 - e^{-2\varepsilon^2 N}.
        \label{eq:Hoeffding_proof2}
\end{equation}
Moreover, let $\beta = e^{-2\varepsilon^2 N}$, and rewrite \cref{eq:Hoeffding_proof2} as
\begin{equation}
    \amsmathbb{P}^N \Big\{ 
        p \leq \frac{\tilde{\discardedSet}}{N} + \sqrt{\frac{1}{2N} \log(\frac{1}{\beta})}
        \Big\} \geq 1 - \beta.
        \label{eq:Hoeffding_proof3}
\end{equation}
In \cref{lemma:prob_upper_bound}, the unknown probability $p$ is the probability for a random sample $\boxSample_{i\ell}$ to be contained in the region $\region_i$:
\begin{equation*}
    p = \amsmathbb{P} \{ 
                \noise_k \in \Omega 
                \colon
                \boxSample_{i\ell} \cap \region_j \neq \emptyset
            \},
    \label{eq:Hoeffding_proof4}
\end{equation*}
and its sum $\tilde{\discardedSet}$ over $N$ samples is defined as in \cref{eq:discardedSet}.
Note that probability $p$ in \cref{eq:Hoeffding_proof3} cannot exceed 1, so we obtain
\begin{equation*}
    \amsmathbb{P}^N \left\{ 
        p \leq 
        \min \left\{ 1, \enskip \frac{\tilde{\discardedSet}}{N} + \sqrt{\frac{1}{2N} \log\Big(\frac{1}{\beta}\Big)} \right\}
        \right\} \geq 1 - \beta,
    \label{eq:Hoeffding_proof5}
\end{equation*}
which is the desired expression in \cref{eq:prob_upper_bound}.
This concludes the proof of \cref{lemma:prob_upper_bound}.

\subsection*{Proof of \cref{theorem:correctness} (Correctness of the iMDP)}
First, note that any transition prescribed by the optimal policy is also realizable on the original system, through the controller defined by \cref{eq:ControlLaw_opt}.
If the true transition probabilities $P(s_i,a_\ell)(s_j)$, for all $s_i, s_j \in \States$, $a_\ell \in \Actions$, are contained in their intervals, then it holds that
\begin{equation}
    V(\state_0, \alpha, c) \geq \reachProbDiscrRobustStar.
\end{equation}
Now, recall from \cref{eq:transition_probability} that transition probabilities are independent of the state in which an action is taken, i.e.,
\begin{align*}
    \transfuncImdp(s, a_\ell,s_j) &= \transfuncLow(s', a_\ell,s_j) \,\,\forall s, s', s_j \in \States, a_\ell \in \Actions.
\end{align*}
Thus, there at most $\vert \States \vert \cdot \vert \Actions \vert$ distinct transition probability intervals.
Each of these intervals contains its true probability with at least a probability of $1 - 2\tilde\beta N$.
Via Boole's inequality (the union bound), we know that all intervals are simultaneously correct with at least a probability of $1 - 2\beta N \cdot \vert\States\vert \cdot \vert\Actions\vert$, which concludes the proof.

\section{Approximate Sample Counting}
\label{sec:approximate_sample_counting}
Recall from \cref{sec:Abstractions} that, to compute \gls{PAC} probability intervals, we need to determine the counts of samples $\discardedSet$ and $\tilde{\discardedSet}$.
This amounts to counting, for every possible successor state $s_j$, the number of samples $\boxSample_{i\ell}^\iota$, $\iota = 1,\ldots,N$ that are contained in $\region_j$ (to determine $\discardedSet$), and the number having a nonempty intersection with $\region_j$ (to determine $\tilde{\discardedSet}$).
In this section, we introduce an approach to reduce the complexity of this procedure, especially when the value of $N$ is large.

\paragraph{Merging samples.}
Intuitively, the idea is to merge samples that are very similar, and to overapproximate these samples as a single, larger sample.
Formally, let $\rho > 0$ be a tuning parameter that reflects the maximum distance for two samples to be merged.
We merge two samples $\boxSample_{i\ell}^{(a)}$ and $\boxSample_{i\ell}^{(b)}$ if their centers, denoted by $h^{(a)} \in \boxSample_{i\ell}^{(a)}$ and $h^{(b)} \in \boxSample_{i\ell}^{(b)}$ are at most a $\rho$-distance apart, i.e.,
\begin{equation}
    \lVert h^{(a)} - h^{(b)} \rVert_2 \leq \rho.
    \label{eq:merging_samples}
\end{equation}
If \cref{eq:merging_samples} holds, we define one larger set $\boxSample_{i\ell}^{(a,b)} \supseteq \boxSample_{i\ell}^{(a)} \cap \boxSample_{i\ell}^{(b)}$ (without loss of generality, we define $\boxSample_{i\ell}^{(a,b)}$ as a hyperrectangle for simplicity).
Then, to determine sets $\discardedSet$ and $\tilde{\discardedSet}$ using \cref{eq:set_R_lower_bound} and \cref{eq:discardedSet}, respectively, a merged sample set is associated with \emph{the number of samples that it represents}.
For example, if $\boxSample_{i\ell}^{(a,b)} \subseteq \region_j$ (i.e., the merged sample that represents $\boxSample_{i\ell}^{(a)}$ and $\boxSample_{i\ell}^{(b)}$ is contained in $\region_j$), we add $2$ to the value of $\discardedSet$.
As a result, this procedure yields \emph{slightly more conservative} (depending on the value of $\rho$), \emph{yet sound estimates} of the counts in $\discardedSet$ and $\tilde{\discardedSet}$, and thus also of the \gls{PAC} probability intervals.

\paragraph{Tractable algorithm.}
Determining the best way to merge samples, however, is a problem of combinatorial complexity.
In our implementation, we thus use a heuristic in which we randomly select a sample, denoted by $\boxSample_{i\ell}^{(a)}$, that has not been merged yet.
We merge this sample with all other (non-merged) samples for which \cref{eq:merging_samples} holds, and we remove these samples from the list of non-merged samples.
To ensure termination, we mark $\boxSample_{i\ell}^{(a)}$ as a merged sample, even if no samples are within a $\rho$-distance of $\boxSample_{i\ell}^{(a)}$.
We repeat this procedure until no non-merged samples remain.

\paragraph{Reduction in complexity.}
While the improvement in computational complexity strongly depends on the model at hand, we have observed significant improvements in the experiments in \cref{sec:Experiments}.
For example, for the numerical experiments in \cref{sec:Experiments}, we used $20\,000$ samples to compute probability intervals, but using the proposed merging procedure with $\rho = 0.01$, we reduced this to around $1\,000$ merged samples.
    \section{Details on Numerical Experiments}
\label{sec:details_experiments}

\subsection*{Longitudinal Drone Dynamics}
To show that our approach is applicable to models with multiple uncertain parameters, we extend the longitudinal drone dynamics from \cref{ex:Model} with an uncertain spring coefficient $\zeta \in \R$, yielding the following model:
\begin{equation*}
        \state_{k+1} = 
        \begin{bmatrix}
            p_{k+1} \\ v_{k+1}
        \end{bmatrix} = 
        \begin{bmatrix}
            1 & \tau \\ -\frac{\zeta}{m} & 1 - \frac{0.1 \tau}{m}
        \end{bmatrix} \state_k +
        \begin{bmatrix}
            \frac{\tau^2}{2m} \\ \frac{\tau}{m}
        \end{bmatrix} \control_k + 
        \noise_k.
\end{equation*}
To write this model in the form of \cref{eq:uncertain_LTI_model}, we need four matrices $A_1, \ldots, A_4$ and $B_1, \ldots, B_4$, which are defined for the combinations of the minimum/maximum mass and spring coefficient.
For this experiment, we constrain the mass in the interval $0.9 \leq m \leq 1.1$ and the spring coefficient in $0.4 \leq \zeta \leq 0.6$.
We fix their nominal values as $\hat{m} = 1$ and $\hat\zeta = 0.5$.
We consider the same reach-avoid problem as in \cref{sec:Experiments} and we use the same partition into $480$ regions.
Moreover, we use $20$K samples to compute transition probability intervals of the \gls{iMDP}, with the approximate sample counting to reduce the computational complexity.

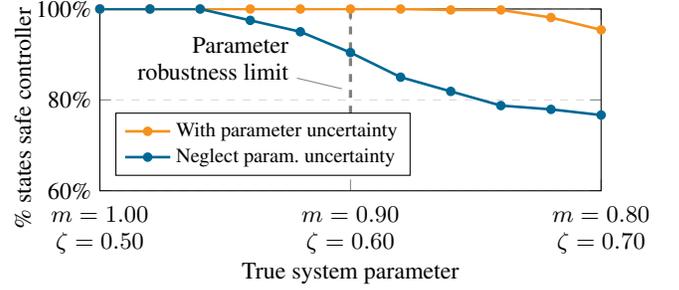
\begin{figure}[t!]
    \centering
        \footnotesize\begin{tikzpicture}
      \begin{axis}[
          width=0.98\linewidth,
          height=4cm,
          ymajorgrids,
          grid style={dashed,gray!30},
          xlabel={True system parameter},
          ylabel={\% states safe controller},
          x label style={at={(axis description cs:0.5,-0.35)}},
          y label style={at={(axis description cs:-0.12,0.40)}},
          x tick label style={align=center, yshift=-0.1cm},
          xmin = 0,
          xmax = 2,
          ymin=0.60,
          ymax=1,
          xtick={0,1,2},
          xticklabels={$m=1.00$\\$\zeta=0.50$,
                       $m=0.90$\\$\zeta=0.60$,
                       $m=0.80$\\$\zeta=0.70$},
          ytick={0.6,0.8,1.0},
          yticklabels={60\%,80\%,100\%},
          every axis plot/.append style={line width=1pt},
          legend cell align={left},
          legend columns=1,
          legend style={at={(0.62,0.43)},
                        nodes={scale=0.85, transform shape},
                        anchor=north east, 
                        column sep=0ex,}
        ]
        
        \addplot[mark=*, mark size=1.3pt, color=BurntOrange] table[x=x, y=y, col sep=semicolon] {Figures/Results/Oscillator/mass_spring/fraction_safe_parametric=True.csv};
        
        
        

        
        \addplot[mark=*, mark size=1.3pt, color=MidnightBlue] table[x=x, y=y, col sep=semicolon] {Figures/Results/Oscillator/mass_spring/fraction_safe_parametric=False.csv};
        
        
        
        
        \draw [dashed, very thick, gray] (axis cs:1,0.7) -- (axis cs:1,1.0);
        
        \node[pin={[pin distance=0.6cm, align=right]177.8:{Parameter\\robustness limit}}] at (axis cs:1,0.82) {};
        
        \legend{{With parameter uncertainty}, {Neglect param. uncertainty}}
      \end{axis}
    \end{tikzpicture}
    \vspace{-2em}
    \caption{Longitudinal drone dynamics of \cref{sec:Experiments}, extended by an uncertain friction coefficient. 
    These results confirm those of \cref{fig:harmonic_oscillator_plot} that our approach is 100\% safe up to the parameter robustness limit, while the approach that neglects epistemic uncertainty is not.}
    \label{fig:harmonic_oscillator_plot_2param}
\end{figure}

\paragraph{Results.}
We present a plot analogous to \cref{fig:harmonic_oscillator_plot} for the model with two uncertain parameters in \cref{fig:harmonic_oscillator_plot_2param}.
The figure shows the percentage of states with a safe controller (see \cref{sec:Experiments} for a definition), versus the values of the drone's mass $m$ and friction coefficient $\zeta$.
These results confirm those presented in \cref{sec:Experiments}, namely that our approach yields 100\% safe controllers up to the parameter robustness limit, while the baseline that neglects epistemic uncertainty is not safe.

\subsection*{Building Temperature Control}
Each room $i=1,\ldots,5$ of the building is modeled by its (zone) temperature $T_i^z \in \R$ and radiator temperature $T_i^r \in \R$.
Each room has a scalar control input $T^\text{ac}_i \in \R$ that reflects the air conditioning (ac) temperature, which is constrained within $15 \leq T^\text{ac}_i \leq 30$.
The change in the temperature of zone $i$ depends on the temperatures in the set of neighboring rooms, denoted by $\mathcal{J} \subseteq \{ 1,\ldots,5 \} \setminus \{i\}$.
Thus, the thermodynamics of the room temperature $T_i^z$ and radiator temperature $T_i^r$ of room $i$ are written as
\begin{align*}
    \dot{T}^z_i &= \frac{1}{C_i} \Big[ \sum_{j \in \mathcal{J}} \frac{T^z_j - T^z_i}{R_{i,j}} + \frac{T_\text{wall} - T^z_i}{R_{i,\text{wall}}} 
    \\
    & \quad\quad + m C_{pa} (T^\text{ac}_i - T^z_i) + P_i (T_i^r - T_i^z) \Big]
    \nonumber
    \\
    \dot{T}_i^r &= k_1 (T_i^z - T_i^r) + k_0 w (T_i^\text{boil} - T_i^r),
\end{align*}
where $C_i$ is the thermal capacitance of zone $i$, $R_{i,j}$ is the resistance between zones $i$ and $j$, $T_\text{wall}$ is the wall temperature, $m$ is the air mass flow, $C_{pa}$ is the specific heat capacity of air, and $P_i$ is the rated output of radiator $i$.
Moreover, $k_0$ and $k_1$ are constants, and $w$ is the water mass flow from the boiler.
We refer to our codes, provided in the supplementary material, for the precise parameter values used.

\paragraph{Modeling epistemic uncertainty.}
Important for the discussion here is that we assume the rated output of each radiator $i = 1,\ldots,5$ to be uncertain, within an interval of $0.8 \leq P_i \leq 1.2$.
We fix its nominal value to be $\hat{P}_i = 1$, so the uncertainty is $\pm 20\%$ around the nominal value.

\paragraph{Interactions between rooms as nondeterminism.}
Since directly discretizing the $10$D state space is infeasible, we use the procedure from \cref{sec:Algorithm} to capture \emph{any possible thermodynamic interaction} between rooms in the additive parameter $\disturbance_k \in \mathcal{Q}$ that belongs to the convex set $\mathcal{Q}$.
Specifically, the set $\mathcal{Q}_i$ affecting the thermodynamics of room $i \in \{1,\ldots,5\}$ is defined as follows (recall that $\SafeSet$ denotes the safe set):
\begin{equation}
    \mathcal{Q}_i = \big\{ \sum_{j \in \mathcal{J}} \frac{T^z_j - T^z_i}{R_{i,j}} \,\colon\, T^z \in \SafeSet \big\},
    \label{eq:BAS_disturbance}
\end{equation}
In other words, the uncertainty set $\mathcal{Q}$ is characterized by the maximal difference between $T_j^z$ and $T_i^z$ within the safe set, for all $j \in \mathcal{J}$, which is $\SI{5}{\celsius}$ for this specific reach-avoid problem (which was defined in \cref{sec:Experiments}).
Thus, depending on the other parameters, we can easily derive a set-bounded representation of $\mathcal{Q}$.

\paragraph{Discretization.}
We discretize the thermodynamics of a single room $i$ by a forward Euler method at a time resolution of $\SI{20}{\minute}$.
Moreover, we consider an additive Gaussian process noise $\noise_k$ on the room temperature of distribution $\Gauss[0]{0.002}$, and on the radiator temperature of distribution $\Gauss[0]{0.01}$.
As the model for room $i$ has only one uncertain parameter (the radiator power output $P_i$), we obtain a model in the form of \cref{eq:uncertain_LTI_model} with $r=2$ matrices $A_1, A_2$ and $B_1, B_2$ (we omit the explicit matrices for brevity).

\paragraph{Results.}
As described in \cref{sec:Experiments}, we apply our method with different partition sizes, and we compare two cases: 1) with the epistemic uncertainty, and 2) without the epistemic uncertainty, in which case we assume that the rated power output of each radiator is $P_i = \hat{P}_i = 1$.
We present the sizes of the obtained \glspl{iMDP} (the number of transitions is the number of $\transfuncImdp(s,a,s')$ in \cref{def:iMDP} with nonzero probability) and the run times in \cref{tab:temperature_control}.
We refer to \cref{sec:Experiments} for a more elaborate discussion of these results.

\begin{table}[t!]

\centering
\caption{\gls{iMDP} sizes and run times for different partitions on the temperature control problem (considering one room, decoupled from the others using the procedure from \cref{sec:Algorithm}).}

{\setlength{\tabcolsep}{4pt}

\scalebox{0.9}{
\begin{tabular}{cllll}
    \hline
    Epist.unc. & Partition & States & Transitions & Run time [s] 
    \\
    \hline
    & $15 \times 25$ & 378 & 84\,539 & 5.64
    \\
    & $25 \times 35$ & 878 & 308\,845 & 10.47
    \\
    & $35 \times 45$ & 1578 & 2\,103\,986 & 25.05
    \\
    & $50 \times 70$ & 3503 & 6\,149\,432 & 62.50
    \\
    & $70 \times 100$ & 7003 & 51\,742\,285 & 308.08
    \\
    \faIcon{check} & $15 \times 25$ & 378 & 100\,994 & 5.68
    \\
    \faIcon{check} & $25 \times 35$ & 878 & 463\,173 & 11.49
    \\
    \faIcon{check} & $35 \times 45$ & 1578 & 2\,932\,224 & 30.06
    \\
    \faIcon{check} & $50 \times 70$ & 3505 & 9\,520\,698 & 80.49
    \\
    \faIcon{check} & $70 \times 100$ & 7003 & 81\,763\,143 & 475.35
    \\
    \hline
\end{tabular}
}
}

\label{tab:temperature_control}
\end{table}

\begin{figure}[t!]
    \centering
    \includegraphics[width = \columnwidth]{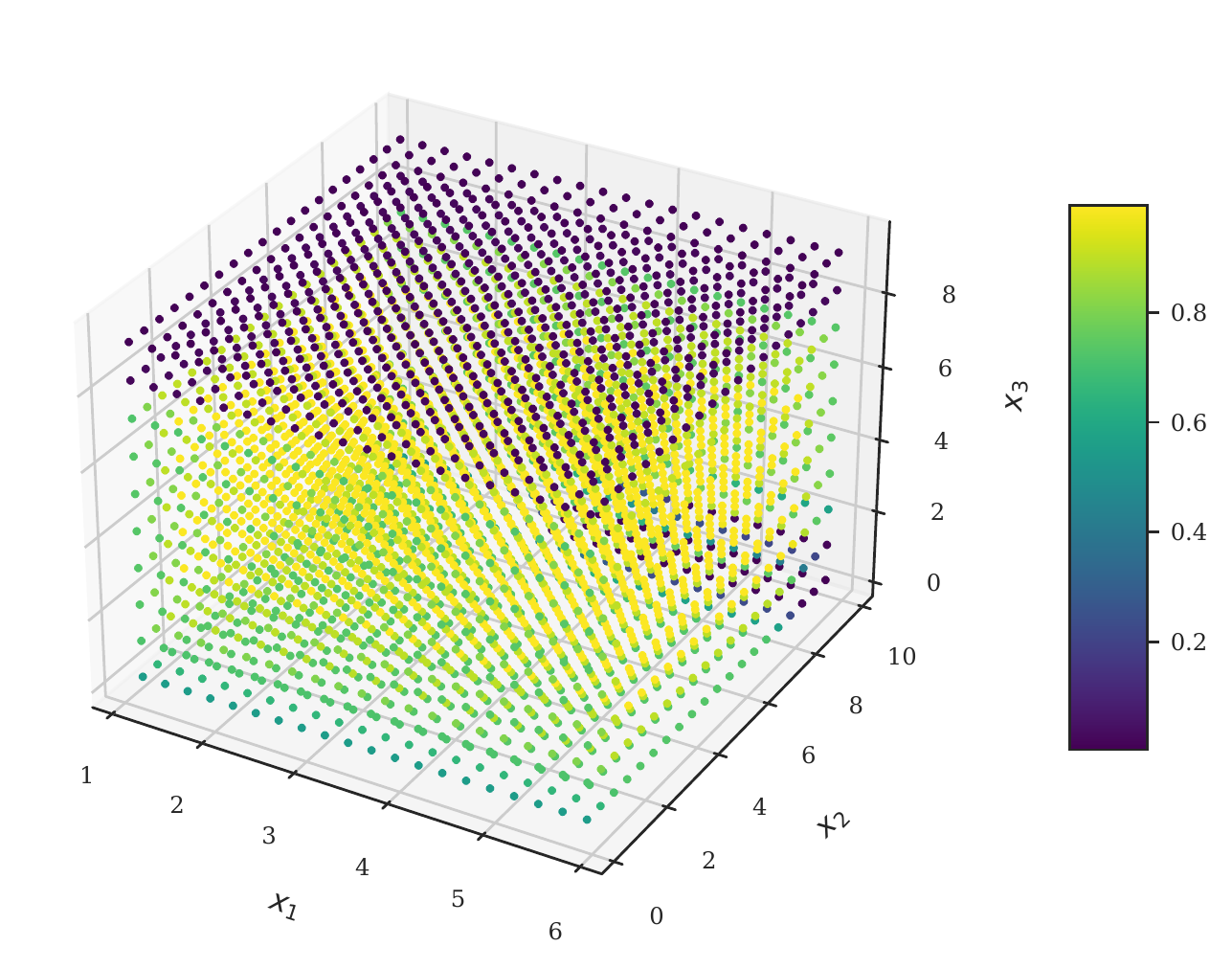}
    \caption{Maximum lower bound probabilities in the anesthesia delivery problem (for a grid of initial state) to remain in the safe set within the horizon of $20$ time steps.}
    \label{fig:drug_results}
\end{figure}

\subsection*{Automated Anesthesia Delivery}
We extend the automated anesthesia (propofol) delivery benchmark from~\citet{DBLP:conf/adhs/AbateBCHHLOSSVV18} with epistemic uncertainty in the pharmacokinetic system parameters.
This benchmark models the concentration of propofol administered to a patient as a \emph{three-compartment pharmacokinetic model}.
The continuous-time, parametric version of this model is written as follows:
\begin{equation}
    \dot{x} = 
    \begin{bmatrix}
        -(k_{10} + k_{12} + k{13}) & k_{12} & k_{13} \\
        k_{21} & -k_{21} & 0 \\
        k_{31} & 0 & -k_{31}
    \end{bmatrix} x + 
    \begin{bmatrix}
        \frac{1}{V_1} \\ 0 \\ 0
    \end{bmatrix} u,
    \label{eq:anaesthesia_model}
\end{equation}
where the state $x \in \R^3$ represents the propofol concentration in each of the three compartments, and where $u \in \R$ is the amount of propofol given to the patient.
Parameters $V_1$, $k_{ij}, i,j \in \{1,2,3\}$ are patient-specific parameters.
We discretize \cref{eq:anaesthesia_model} at a time step of $\SI{20}{\second}$, assuming a zero-order hold (i.e., piece-wise constant) control input.

\paragraph{Capturing uncertainty.}
We use the same parameter values as those reported by~\citet{DBLP:conf/adhs/AbateBCHHLOSSVV18} and assume that parameters $k_{10}$, $k_{21}$, and $V_1$ are uncertain within $\pm 10\%$ of their nominal values.
To capture these uncertain parameters, we write \cref{eq:anaesthesia_model} in the form of \cref{eq:uncertain_LTI_model} with $r = 2^3 = 8$ matrices $A_1,\ldots,A_r$ and $B_1,\ldots,B_r$.
In addition to this epistemic uncertainty, we also add a stochastic process noise, which is assumed to have a zero-mean Gaussian distribution with a diagonal covariance matrix $10^{-3} I_3$, where $I_3 \in \R^{n\times n}$ is the identity matrix.

\paragraph{Planning problem and abstraction.}
We consider a safety problem, where the goal is to keep the propofol concentration within a safe set $\SafeSet = [1, 6] \times [0, 10] \times [0, 10]$ for $20$ discrete time steps.
We partition the continuous state space into $20 \times 20 \times 20 = 8\,000$ discrete regions, yielding an \gls{iMDP} with this same number of states.
We apply \cref{theorem:probability_interval} with $20$K noise samples, but with the approximate counting scheme described in \cref{sec:approximate_sample_counting}, we reduce the number of samples to around $1\,000$.

\paragraph{Results.}
\change{We present a 3-dimensional heatmap of the optimal reachability probabilities $\reachProbDiscrRobustStar$ under the robust optimal \gls{iMDP} policy for different initial states in \cref{fig:drug_results}.}
Recall from \cref{theorem:correctness} that these results are lower bound guarantees on the performance of a controller in practice.
We observe that, except when the initial concentration in compartment 3 is too high (approximately above 8), we are able to synthesize a controller that enables the system to remain in the safe set for the horizon of $20$ discrete steps.
However, when the initial concentration in compartment 3 is too high, no safe controller could be synthesized, as reflected by the low probabilities shown in \cref{fig:drug_results}. 
\fi

\end{document}